\newcommand{\com}[1]{\textcolor{red}{#1}}
\theoremstyle{thmstyleone}%
\theoremstyle{thmstyletwo}%
\theoremstyle{thmstylethree}%
\newtheorem{thm}{Theorem}[section]
\numberwithin{thm}{section}
\theoremstyle{definition}
\newtheorem{defn}[thm]{Definition}
\newtheorem{ex}[thm]{Example}
\newtheorem{rmk}[thm]{Remark}
\newtheorem{prop}[thm]{Proposition}
\newtheorem{corr}[thm]{Corollary}
\newtheorem{lem}[thm]{Lemma}
\numberwithin{equation}{section}
\begin{document}

\title[An algebraic characterization of self-generating chemical reaction networks]{An algebraic characterization of self-generating chemical reaction networks using semigroup models}


\author{\fnm{Dimitri} \sur{Loutchko}} \email{d.loutchko@edu.k.u-tokyo.ac.jp}

\affil{\orgdiv{Institute of Industrial Science}, \orgname{The University of Tokyo}, \orgaddress{\street{4-6-1, Komaba}, \city{Meguro-ku}, \postcode{153-8505}, \state{Tokyo}, \country{Japan}}}

\abstract{
The ability of a chemical reaction network to generate itself by catalyzed reactions from constantly present environmental food sources is considered a fundamental property in origin-of-life research.
Based on Kaufmann's autocatalytic sets, Hordijk and Steel have constructed the versatile formalism of catalytic reaction systems (CRS) to model and to analyze such self-generating networks, which they named reflexively autocatalytic and food generated (RAF).
Previously, it was established that the subsequent and simultaenous catalytic functions of the chemicals of a CRS give rise to an algebraic structure, termed a {\it semigroup model}. 
The semigroup model allows to naturally consider the function of any subset of chemicals on the whole CRS.
This gives rise to a generative dynamics by iteratively applying the function of a subset to the externally supplied food set.
The fixed point of this dynamics yields the maximal self-generating set of chemicals.
Moreover, the lattice of all functionally closed self-generating sets of chemicals is discussed and a structure theorem for this lattice is proven.
It is also shown that a CRS which contains self-generating sets of chemicals cannot be nilpotent and thus a useful link to the combinatorial theory of finite semigroups is established. 
The main technical tool introduced and utilized in this work is the representation of the semigroup elements as decorated rooted trees, allowing to translate the generation of chemicals from a given set of resources into the semigroup language.
}

\keywords{Biochemical reaction networks, Autocatalytic sets, Algebraic models, Finite semigroups}



\maketitle

\section{Introduction} \label{sec:intro}

Questions about the origin of life are as fascinating as they are difficult to even attempt to answer.
There are at least two schools of thought on how to approach such questions.
The first one is to construct minimal models involving concrete chemicals, best exemplified by the RNA world hypothesis formulated by \cite{gilbert1986}, \cite{joyce1989} and many others.
The great advantage of such concrete models is that they can be tested experimentally, going all the way back to the classical experiments by \cite{miller1953} and \cite{oro1961}.
However, there can never be certainty about any hypothesized model, and even the most convincing ones such as the RNA world hypothesis lack reliable data with regard to their first appearance, cf. \cite{joyce2002,penny2005}.
An alternative school of thought is focused on working out the minimal requirements which any sensible theory of the origin of life should satisfy.
Prominent proponents of this approach are \cite{Oparin1957}, \cite{Dyson1999}, \cite{Kauffman1986}, and many others.
However, already the formulation of a meaningful theoretical framework is challenging and there have been various attempts including $(M,R)$-systems by \cite{Rosen1958}, hypercycles by \cite{Eigen1971}, autopoetic systems by \cite{Varela1974}, chemotons by \cite{Ganti1975} and autocatalytic sets by \cite{Kauffman1986}.
A common feature that all frameworks have in common is the importance of autocatalysis and the occurrence of autocatalytic cycles as discussed in the review by \cite{Hordijk2018}.\\

The catalytic reaction system (CRS) formalism by \cite{Steel2000,Hordijk2004} is a versatile framework that, motivated by Kauffman's autocatalytic sets, captures the essence of several of the aforementioned approaches.
It has been used to compute thresholds for the occurrence of self-generating and self-sustaining motives in CRS based on the level of catalysis by \cite{Hordijk2010, Hordijk2011, Hordijk2012, Hordijk2015, Hordijk2017, Hordijk2018} and even for the analysis of the metabolic network of {\it E. Coli} by \cite{Sousa2015}.\\

In the companion article by \cite{Loutchko2019}, it has been shown that CRS have an algebraic structure that is generated by the simultaneous and subsequent function of chemicals acting as catalysts on the CRS.
It was then shown how a naturally defined discrete dynamics yields the maximal self-sustaining set of chemicals for any given CRS and a characterization of the lattice of functionally closed self-sustaining sets of chemicals was derived.
This article aims to achieve the same for self-generating sets of chemicals, which is a stricter notion than that of self-sustainment and requires more mathematical care.
In this regard, the main technical contribution of this article is to construct a representation of the semigroup elements as decorated rooted trees as they are naturally suited to deal with the {\it generation} of chemicals from a set of externally supplied chemicals.

\paragraph{Mathematical outline}

The construction of the semigroup models is based on the CRS formalism introduced by \cite{Hordijk2004,Hordijk2011}.
A CRS is given by the datum of a chemical reaction network, i.e. a finite set of chemicals $X$ together with a finite set of reactions $R$ where each reaction $r \in R$ is determined by the set of its reactants $\textrm{dom}(r) \subset X$ and products $\textrm{ran}(r) \subset X$.
Additionally, catalysis data is specified by a set $C \subset X \times R$ meaning that for each $(x,r) \in C$, the reaction $r$ is catalyzed by the chemical $x$, and a food set $F \subset X$ of constantly supplied chemicals is given.
A CRS is said to be RAF (reflexively autocatalytic and food-generated) if each chemical in the CRS can be generated from the food set $F$ by a series of catalyzed reactions.
A set of chemicals is said to be RAF if the CRS supported on it is RAF.
The notion of RAF formalizes self-generating reaction networks in the framework of CRS.
Details on CRS are given in Section \ref{sec:CRS}.\\

In Section \ref{sec:SemigroupBasics}, it is shown that the reactions and the catalytic functions of chemicals have the structure of a semigroup, which is additionally equipped with a partial order and an idempotent addition.
The semigroup operation corresponds to subsequent functionality whereas the addition corresponds to simultaneous application of functions.
More precisely, to each reaction $r \in R$ a function $\phi_r$ is assigned as the set-map $\phi_r: \mathfrak{X} \rightarrow \mathfrak{X}$ on the power set $\mathfrak{X}:=\mathcal{P}(X_F)$ of non-food chemicals $X_F = X \setminus F$.
The function $\phi_r$ gives the set of non-food products of $r$ if and only if the set of non-food reactants of $r$ is contained in its argument.
Such functions have the usual composition given by ${(\phi_r \circ \phi_{r'})(Y) = \phi_r (\phi_{r'}(Y))}$ and an idempotent addition given by ${(\phi_r + \phi_{r'})(Y) = \phi_r(Y) \cup \phi_{r'}(Y)}$ for all $Y \subset X_F$ and $r,r' \in R$.
They generate the semigroup model
\begin{align*}
 \mathcal{S}^R = \langle \phi_r \rangle _{r \in R}.
\end{align*}
To each of the chemicals $x \in X$, a function $\phi_x: \mathfrak{X} \rightarrow \mathfrak{X}$ is assigned by using the catalysis data:
\begin{equation*}
 \phi_x = \sum_{(x,r) \in C} \phi_r.
\end{equation*}
The functions of the chemicals generate the semigroup model 
\begin{equation*}
 \mathcal{S} = \langle \phi_x \rangle _{x \in X},
\end{equation*}
which is a subsemigroup of $\mathcal{S}^R$.
The objects $\mathcal{S}^R$ and $\mathcal{S}$ are semigroups with respect to both $+$ and $\circ$, hence they are called {\it semigroup models}.

The elements of the semigroup models are partially ordered via $\phi \leq \psi$ iff $\phi(Y) \subset \psi(Y)$ for all $Y \subset X_F$.
Lemma \ref{lem:elementaryProperties} states the the partial order on the semigroup models, the partial order on $\mathfrak{X}$, and the two operations $\circ$ and $+$ are all compatible.
A central notion is the function $\Phi_Y \in \mathcal{S}$ of a set of non-food chemicals $Y \subset X_F$ which is defined as the unique maximal element of the subsemigroup
\begin{align*}
 \mathcal{S}(Y) = \langle \phi_x \rangle _{x \in Y \cup F}
\end{align*}
of $\mathcal{S}$.
The function $\Phi_Y$ captures all catalytic functionality that can be exerted by $Y$ and the food set on all other chemicals of the CRS.\\

Section \ref{sec:prelim} provides more insight into the structure of the semigroup models.
The basis is the definition of a tree algebra $\mathfrak{T}(A)$ with a decorating algebra $(A,\circ,+)$ as follows:
The objects in $\mathfrak{T}(A)$ are rooted trees, whose edge labels are arbitrary elements in $A$.
The vertrex labels are determined by these edge labels:
All leaves are labelled by the multiplicatively neutral element $\textrm{id}$.
At each non-leaf vertex the labels of the outgoing edges are multiplied with the labels on the respective child vertex and the sum is taken over all the outgoing edges.
This is illustrated in Fig. \ref{fig:intro}A.
The addition of trees is performed by identifying their roots, with unchanged labels at the edges, as illustrated in Fig. \ref{fig:intro}B.
The multiplication of trees $T_1 \circ T_2$ is carried out by replacing all leaves of $T_1$ with copies of $T_2$.
Again, all edge labels are unchanged, as illustrated in Fig. \ref{fig:intro}C.

\begin{figure}[ht]
  \centering
  \includegraphics[scale=0.25]{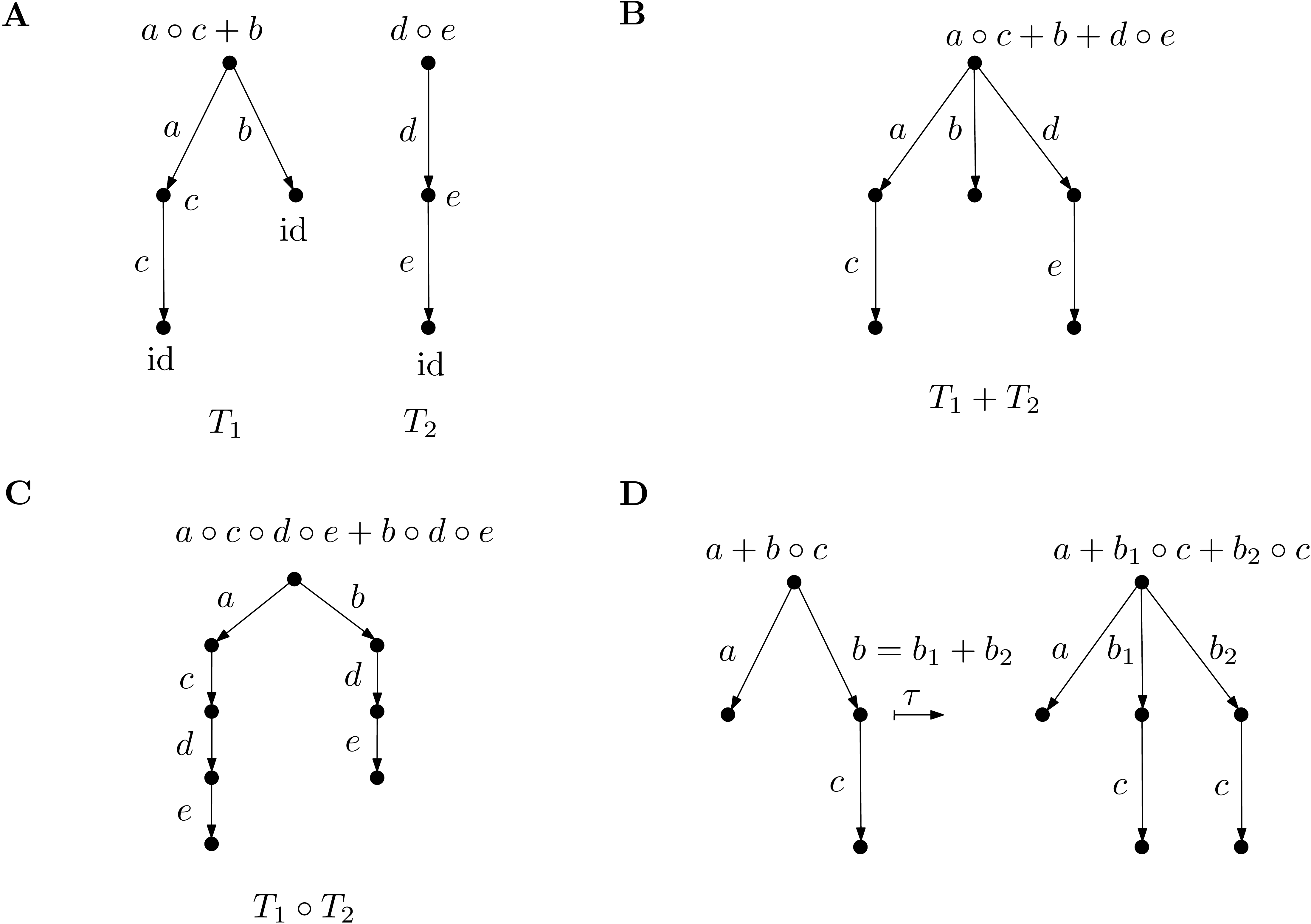}
  \caption{An illustration of the algebra of decorated rooted trees.
  {\bf A} The edge labels $a,b,c,d,e \in A$ determine the vertex labels of the two trees $T_1$ and $T_2$ recursively: the leaves are labelled by the multiplicatively neutral element and each vertex function is given by the summation of the labels over all outgoing edges, multiplied with the labels of the vertices at their heads.
  All edge and the resulting vertex labels are shown here, whereas in B,C and D only the labels of the edges and of the root are shown.
  {\bf B} Addition of the two trees $T_1$ and $T_2$: The roots of both trees are identified and the labels on all edges of both trees are retained.
  The vertex labels are determined as in A.
  The root label of $T_1 + T_2$ is equal to the sum of the root labels of $T_1$ and $T_2$.
  {\bf C} Multiplication of two trees $T_1$ and $T_2$: Each leaf of $T_1$ is replaced with a copy of $T_2$.
  Thereby, the edge labels from the original trees are retained, which yields the respective vertex labels.
  If the right distributivity of the operations $+$ and $\circ$ holds, then the root label of $T_1 \circ T_2$ is equal to the concatenation of the root labels of $T_1$ and $T_2$.
  {\bf D} The replacement of an edge with label $b = b_1 + b_2$ by two edges with labels $b_1$ and $b_2$.
  A copy of the child tree of the original edge is attached to each of the new edges.
  If the right distributivity of the operations $+$ and $\circ$ holds, then the root labels of both shown trees are equal.
  }
  \label{fig:intro}
\end{figure}

The tree algebras relevant for semigroup models have their edges labelled by the generating sets of the respective models, i.e. they are ${\mathfrak{T} := \mathfrak{T}(\{\phi_x\}_{x \in X} \cup \{0\})}$ and ${\mathfrak{T}^R := \mathfrak{T}(\{\phi_r\}_{r \in R} \cup \{0\})}$.
The main result of the section is Theorem \ref{thm:function}, which states that there is a commutative diagram of homomorphisms
\[\begin{tikzcd}
 \mathfrak{T}\arrow[r,twoheadrightarrow, "ev"]\arrow[d,hookrightarrow,"\tau"] & \mathcal{S} \cup \{ \textrm{id}\mid_{\mathfrak{X}} \} \arrow[d,hookrightarrow,"\iota"]\\
 \mathfrak{T}^R \arrow[r,twoheadrightarrow,"ev"]& \mathcal{S}^R \cup \{ \textrm{id}\mid_{\mathfrak{X}} \},
\end{tikzcd}\]
whereby the surjective evaluation map $ev$ sends the root label to the corresponding semigroup element and the map $\tau$ is defined based on the formula $\phi_x = \sum_{(x,r) \in C} \phi_r$.
More precisely, $\tau$ replaces an edge with the label $\phi_x$ by edges labeled by $\phi_r$ for each $(x,r) \in C$ and a copy of the child tree of the original edge is attached to each of the new edges, as illustrated in Fig. \ref{fig:intro}D.
A tree representing a semigroup element is a lift of the element via the evaluation homomorphism $ev$.
The algebraic reason for the existence of such representations is the interplay of the two operations $\circ$ and $+$ via the right distributivity $\phi \circ \chi + \psi \circ \chi = (\phi + \psi) \circ \chi$ for any $\phi, \psi, \chi \in \mathcal{S}^R$.

Loosely speaking, the trees in $\mathfrak{T}^R$ correspond to ''reaction mechanisms'', which proceed recursively from the leaves to the root such that a reaction labeling an edge occurs subsequently with the ''mechanism'' of its head vertex and such that all reactions labeling edges with same tail are carried out simultaneously.
Thus, it is natural to assume that a chemical $x \in X_F$ can be generated from the food set if there is a reaction mechanism for its generation, given by a tree $T \in \mathfrak{T}^R$.
This translates to $x \in ev(T)(\emptyset)$ in this setup.
And indeed, it is proven in Lemma \ref{lem:treesFproperty} that this property is equivalent to the standard definition of generation from the food set.\\

In Section \ref{sec:RAF}, it is shown how the representation of semigroup elements by decorated rooted trees can be used to describe CRS with the RAF property by the simple condition $\Phi_{X_F}(\emptyset) = X_F$ (Theorem \ref{thm:RAF}).
This implies that for a RAF set of chemicals $X'_F \subset X_F$, the property $X_F' \subset \Phi_{X_F'}(\emptyset)$ holds (Corollary \ref{corr:maxRAF}) and, moreover, that the equality $ X_F' = \Phi_{X_F'}(\emptyset)$ is a sufficient condition for $X'_F$ to be a RAF set of chemicals (Proposition \ref{prop:RAFsufficient}).
Then, a generative dynamics on $\mathfrak{X}$ is defined by $Y \mapsto \Phi_Y(\emptyset)$ and, as one of the main results, it is proven that the dynamics with initial condition given by $X_F$ leads to the maximal RAF set of chemicals.
Finally, new insights and conjectures gained from the semigroup approach to CRS with the RAF property are discussed.
It is shown that the generative dynamics with the initial condition given by a RAF set of chemicals $X_F'$ leads to a fixed point $X_F'^{*g}$, which contains $X_F'$.
If $X_F' \subsetneq X_F'^{*g}$ holds, then $X_F'$ is not stable because its own catalytic function will produce all chemicals in $X_F'^{*g}$ over time.
Therefore $X_F'^{*g}$ is termed the {\it functional closure} of $X_F'$.
A characterization of the lattice of all functionally closed RAF sets of a CRS is provided in Theorem \ref{thm:funcClosed}.
It is based on the reduced generative dynamics given by $Y \mapsto Y \cap \Phi_Y(\emptyset)$.
This dynamics always has a fixed point, denoted by $Y_0^{*rg}$ for the inital condition $Y_0$.
For each set $Y \subset X_F$, the set of fixed points
\begin{align*}
    \mathfrak{N}(Y) := \left\{ (Y \setminus \{y\})^{*rg} \textrm{ for }y \in Y \right\}
\end{align*}
is introduced and one recursively defines
\begin{align*}
    \mathfrak{N}^0 &:= \{ X_F^{*g} \} \\
    \mathfrak{N}^{i+1} &:= \bigcup_{Y \in  \mathfrak{N}^i} \mathfrak{N}(Y) \textrm{ for all }i \in \mathbb{Z}_{\geq 0}.
\end{align*}
The statement of the Theorem \ref{thm:funcClosed} is that the lattice of functionally closed RAF sets of chemicals is given by 
\begin{equation*}
    \mathfrak{N} := \bigcup_{i=0}^{\mid X_F \mid} \mathfrak{N}^i.
\end{equation*}

In the concluding Section \ref{sec:discussion}, the importance of the representations of semigroup elements by decorated rooted trees is discussed, and the biochemical significance of functionally closed RAF sets of chemicals is illustrated.
For example, one would expect chemicals which are uniquely contained in a minimal functionally closed RAF set of chemicals to be involved solely in the functionality of the respective RAF set, whereas chemicals that have multiple minimal functionally closed RAF sets of chemicals containing them are more likely to be involved in communication and interaction between the respective RAF sets.
This can potentially carry information on the evolutionary role of the respective chemicals.
This is an illustration of how the semigroup models can be used to discover new concepts in CRS theory.
In future work, such concepts will be applied to CRS corresponding to real biological systems.

\section{Semigroup models} \label{sec:semigroup models}

The construction of semigroup models and their elementary properties are provided in Section \ref{sec:SemigroupBasics}.
They are based on the catalytic reaction system (CRS) formalism, which is introduced in Section \ref{sec:CRS}.
This is a condensed version of the Sections 2. and 3. from the introductory companion article by \cite{Loutchko2019}.
Only the RAF property (Definition \ref{def:RAF}) and the extended semigroup model $\mathcal{S}^R$ (Definition \ref{def:semigroupModel}) are newly introduced here.

\subsection{The CRS formalism} \label{sec:CRS}

The introduction of the catalytic reaction system (CRS) formalism and of the reflexievly-autocatalytic and food generated (RAF) property are based on the work of \cite{Hordijk2004}.

\noindent The notion of CRS is designed to capture the catalytic functionality within a given chemical reaction network. 
It does not take into account detailed kinetic or thermodynamic information.

\begin{defn} \label{def:CRS}
A {\it catalytic reaction system} (CRS) is a tuple $(X,R,C,F)$ where $X$ is a finite discrete set of chemicals, $R$ is a finite set of reactions, $C \subset X \times R$ is the catalysis data for the reactions $R$ and $F \subset X$ is the constantly present food set.

\noindent Each reaction $r \in R$ is given by a pair $(\textrm{dom}(r),\textrm{ran}(r))$ of mutually disjoint subsets of $X$, called the {\it domain} and the {\it range} of $r$.
The elements of $\textrm{dom}(r)$ are called the reactants and the elements of $\textrm{ran}(r)$ are the products of $r$.
For a pair $(x,r) \in C$, the reaction $r$ is said to be {\it catalyzed} by $x$ and $x$ is said to be a {\it catalyst} of $r$.
The food set $F$ is required to satisfy the following closure property:
\begin{enumerate} [label=(C),leftmargin=1cm]
    \item All reactions $r \in R$ with a catalyst in $F$ must involve chemicals outside of $F$ as reactants, i.e. they must satisfy $\text{dom}(r) \cap (X \setminus F) \neq \emptyset$.  \label{cond:C}
\end{enumerate}
If $X=F$, the CRS is said to be {\it trivial}.
\end{defn}

\begin{ex}
Fig. \ref{fig:example1} shows a representation of a CRS as a directed bipartite graph.
This representation is used throughout this article.
The chemicals are represented by solid vertices and the reactions $r = (\textrm{dom}(r),\textrm{ran}(r))$ are represented by circles.
For each reaction, there are directed edges from each chemical in $\textrm{dom}(r)$ to the reaction vertex and from the reaction vertex to each chemical in $\textrm{ran}(r)$.
The catalysis data $(x,r) \in C$ is indicated by a dashed directed edge from the chemical $x$ to the reaction $r$.
The food set is indicated by a circle around the food chemicals.
\begin{figure}[htb]
  \centering
  \includegraphics[scale=0.25]{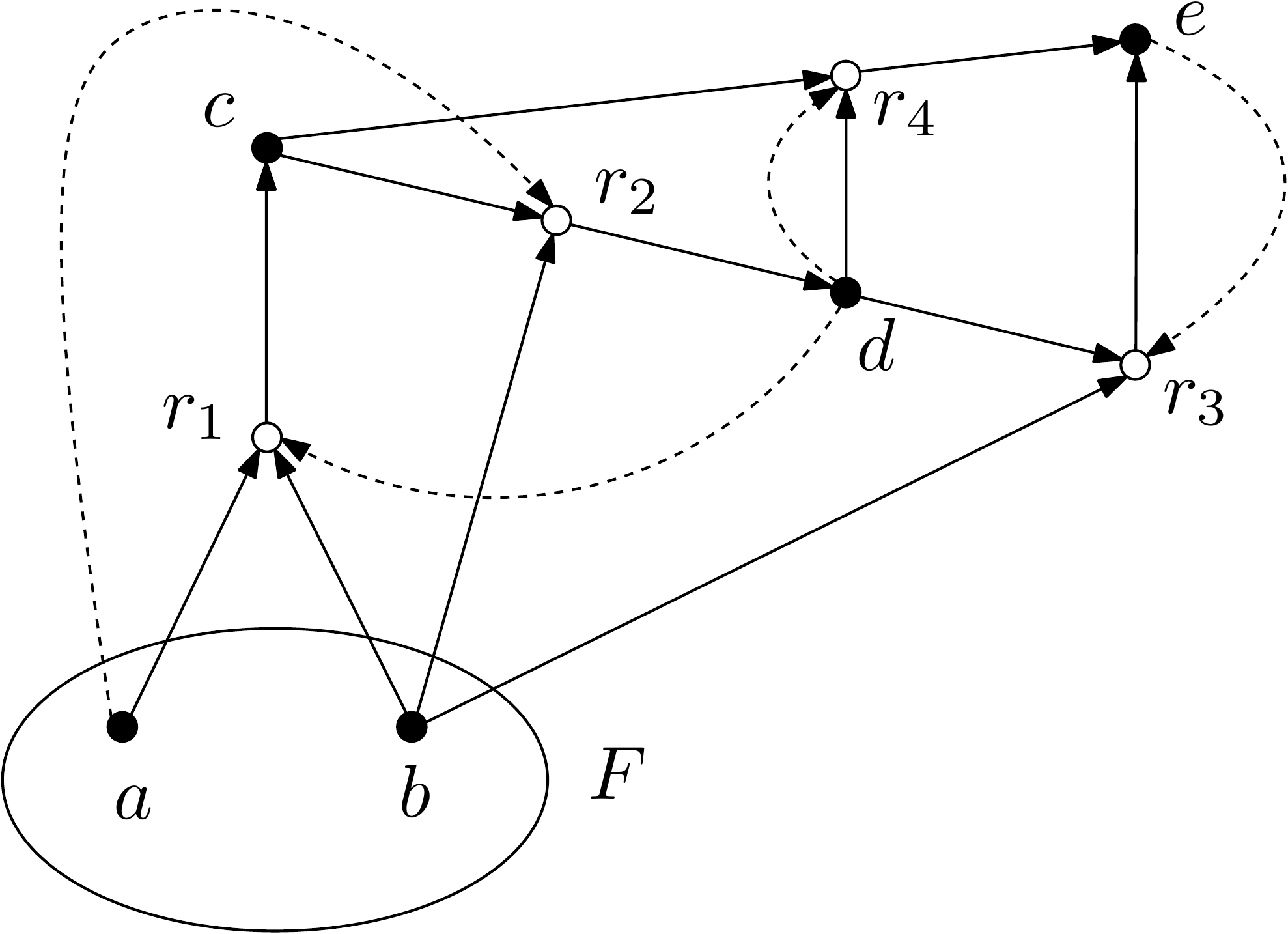}
  \caption[Example of a catalytic reaction system]{Example of a graphical representation of a CRS. 
  The CRS consists of five chemicals $X=\{a,b,c,d,e\}$ and four reactions $a+b\rightarrow c$, $c+b \rightarrow d$, $b+d \rightarrow e$ and $c+d \rightarrow e$, which are catalyzed by $d$, $a$, $e$ and $d$, respectively.
  The food set is given by $F = \{a,b\}$.}
  \label{fig:example1}
\end{figure}
\end{ex}

The projection
\begin{align*}
    \pi_R :C \rightarrow R
\end{align*}
yields the set of all catalyzed reactions of the CRS as $\pi_R(C)$.

\noindent The {\it support} of a reaction $r$ is defined as
\begin{align*}
    \textrm{supp}(r) := \textrm{dom}(r) \cup \textrm{ran}(r)
\end{align*}
and the notions of domain, range, and support extend to sets of reactions $R' \subset R$ via
\begin{align*}
    \textrm{dom}(R') = \bigcup_{r \in R'} \textrm{dom}(r)
\end{align*}
with the analogous definitions for $\textrm{ran}(R')$ and $\textrm{supp}(R')$.

From now on, a CRS $(X,R,C,F)$ will be fixed.
When referring to any of the four sets $X,R,C$ or $F$, it is implicitly assumed that they are part of the full data of the CRS.
It will be convenient to abbreviate the non-food chemicals as
\begin{align*}
    X_F := X \setminus F
\end{align*}
and to make the same definition for any subset $X'$ of $X$ containing $F$, i.e. $X'_F := X' \setminus F$.
Moreover, given a set $X'_F \subset X_F$, the symbol $X'$ will denote the set $X'_F \cup F \subset X$. 

\begin{defn} \label{def:subCRS}
For a set $X'_F \subset X_F$ of non-food chemicals, define the restrictions of $R$ and $C$ as
\begin{align*}
    R\mid_{X'} &= \{ r \in R \text{ such that supp}(r) \subset X'\}, \\
    C\mid_{X'} &= C \cap (X' \times R\mid_{X'}).
\end{align*}
The tuple $(X',R\mid_{X'},C\mid_{X'},F)$ is called the {\it subCRS generated by $X'_F$}.
\end{defn}

In the article by \cite{Loutchko2019}, a broader notion of subCRS is introduced.
This notion is, however, not needed in this work as the focus will be exclusively on subCRS generated by sets of non-food chemicals.
Note that the subCRS according to Definition \ref{def:subCRS} is always closed in the terminology used by \cite{Hordijk2017}, i.e. all reactions of the full CRS with support on $X'$ are actually contained in the respective subCRS.

Now the central notions of food generated CRS and reflexively autocatalytic and food generated (RAF) CRS are introduced following \cite{Hordijk2004,Hordijk2017}.
However, the definitions given by \cite{Hordijk2004,Hordijk2017} are centered around the set of reactions $R$, whereas the definitions given here involve the whole CRS. 
In Remark \ref{rmk:RAFliterature}, the relation to the definitions used in this work is discussed.
The F property formalizes the idea that all chemicals of the CRS can be {\it generated} from the food set.
The RAF property means that the generation from the food set can be achieved with catalyzed reactions only.

\begin{defn} \label{def:Fnetwork}
A CRS $(X,R,C,F)$ has the {\it food generation property} (F property) if each $x \in X_F$ is generated by some sequence of reactions from $F$, i.e. if the following condition is satisfied for each $x \in X_F$:
\begin{enumerate} [label=(F),leftmargin=1cm]
    \item There exist sets of reactions $R_1,...,R_n \subset R$ with the following properties: \label{cond:F}
\end{enumerate} 
\begin{enumerate} [label=(F\arabic*),leftmargin=2cm]
    \item $\textrm{dom}(R_1) \subset F$. \label{cond:F1}
    \item $\textrm{dom}(R_{i+1}) \subset \bigcup_{j=1}^i \textrm{ran}(R_j) \cup F$ for all $1 \leq i \leq n-1$. \label{cond:F2}
    \item $x \in \textrm{ran}(R_n)$. \label{cond:F3}
\end{enumerate}
\end{defn}

\begin{defn} \label{def:RAF}
A CRS $(X,R,C,F)$ is {\it refelxively autocatalytic and food generated} (RAF) if it is has the F property and if for each chemical $x \in X_F$, the sets of reactions $R_1,\dots R_n \subset R$ featured in the condition \ref{cond:F} can be chosen to be subsets of $\pi_R(C)$. 
In other words, the reactions in $R_1,\dots R_n$ are all required to be catalyzed.
\end{defn}

\begin{rmk} \label{rmk:selfSustaining}
The notion of {\it self-generation} is stronger than the one of {\it self-sustainment}.
Self-sustaining CRS are treated within the semigroup formalism by \cite{Loutchko2019}.
Self-sustainment requires the CRS to have a catalyzed set of reactions $R' \subset \pi_R(C)$ such that $\textrm{dom}(R') \subset \textrm{ran}(R') \cup F$ and $X_F \subset \textrm{ran}(R')$.
The RAF condition is stronger than this, because one can set $R'_x := \cup_{i=1}^n R_n$ for the reactions featured in condition \ref{cond:F} and $R' = \cup_{x \in X_F} R'_x$ will satisfy the requirement for self-sustainment.
On the contrary, there are self-sustaining CRS which are not self-generating.
\end{rmk}

\noindent The definition of the RAF property descends to sets of non-food chemicals $X'_F \subset X_F$ based on the Definition \ref{def:subCRS}.

\begin{defn} \label{def:RAFsetofchemicals}
    A set of chemicals $X'_F \subset X_F$ is said to be a {\it RAF set of chemicals} if the subCRS $(X',R\mid_{X'},C\mid_{X'},F)$ generated by it is RAF.
\end{defn}

\begin{ex}
The CRS in Fig. \ref{fig:example1} is RAF and thus $X_F = \{c,d,e\}$ is a RAF set of chemicals.
Moreover, there is a RAF subset of chemicals consisting of $X_F' = \{c,d\}$, because $d$ catalyzes the formation of $c$ from the food set and $c$ reacts with the food set to form $d$, which is catalyzed by the food set.
\end{ex}

\begin{rmk}[Relation to the notion of RAF commonly used in the literature] \label{rmk:RAFliterature}
The Definition \ref{def:Fnetwork} of the F property given here coincides {\it verbatim} with the one commonly used in the CRS literature.
The Definition \ref{def:RAF} of the RAF property is equivalent to the definitions of a closed\footnote{In \cite{Hordijk2017}, a subset $R' \subset R$ is called a closed RAF set of reactions if it is RAF and if, in addition, all reactions with a catalyst and support in $\textrm{supp}(R')$ are elements of $R'$.} RAF set of reactions given by \cite{Hordijk2004,Hordijk2017} modulo the inclusion of uncatalyzed reactions in the set of reactions $R$ in the definition given here.
\cite{Hordijk2004,Hordijk2017} define the RAF property for subsets of $R$ as follows:
A subset $R' \subset R$ is a RAF set of reactions if it has the F property\footnote{The F property for a set $R' \subset R$ means that each chemical in $\textrm{supp}(R') \cap X_F$ satisfies the condition \ref{cond:F} with the $R_i$ satisfying $R_i \subset R'$.} and if each reaction $r \in R'$ is catalyzed by a chemical $x \in \textrm{supp}(R') \cup F$.
Thus, a subCRS $(X',R\mid_{X'},C\mid_{X'},F)$ corresponds the RAF set of reactions $\pi_{R'}(C')$ and, {\it vice versa}, a closed RAF set of reactions $R'$ corresponds to the CRS $(X',R\mid_{X'},C\mid_{X'},F)$ with $X' := \textrm{supp}(R')$.

One can easily lift the restriction of the RAF sets of reactions being closed by defining subCRS with sets of chemicals $X'$ to allow for arbitrary sets of reactions $R' \subset R\mid_{X'}$.
This construction is given by \cite{Loutchko2019}.
\end{rmk}

\subsection{The semigroup model of a CRS} \label{sec:SemigroupBasics}

The chemical reactions of a CRS have a natural algebraic structure given by the simultaneous and subsequent occurrence of reactions, as well as combinations thereof.
Making this mathematically precise leads to the notion of an extended semigroup model $\mathcal{S}^R$ of a CRS.
The function of a chemical is defined by the simultaneous occurrence of all the reactions it catalyzes.
All combinations of subsequent and simultaneous functions of chemicals give rise to the semigroup model $\mathcal{S}$ of a CRS.
The construction of the semigroup models is motivated by the work of \cite{Rhodes2010} in spirit, but technically the objects constructed here differ significantly, cf. \cite{Loutchko2019}, Remark 3.4.

Throughout this section, let $(X,R,C,F)$ be a CRS.
The state of the CRS is defined by the presence or absence of each of the non-food chemicals, i.e. by a subset $Y \subset X_F$.
Therefore, the state space $\mathfrak{X}$ of the CRS is the power set
\begin{align*}
    \mathfrak{X} := \mathcal{P}(X_F) = \{0,1\}^{X_F}.
\end{align*}
A reaction $r \in R$ acts on the state space via its {\it function}
\begin{equation*}
 \phi_r : \mathfrak{X} \rightarrow \mathfrak{X}
\end{equation*}
given by
\begin{align} \label{eq:phi_r}
  \phi_r(Y) =\begin{cases}
    \text{ran}(r) \cap X_F & \text{if $\text{dom}(r) \subset Y \cup F$}\\
	\emptyset & \text{else}
  \end{cases}
\end{align}
for all $Y \subset X_F$.
Two maps $\phi, \psi: \mathfrak{X} \rightarrow \mathfrak{X}$ can be composed via the addition $+$, which is defined as
\begin{equation} \label{eq:addition}
  (\phi + \psi)(Y) = \phi(Y) \cup \psi(Y)
\end{equation} for all $Y \subset X_F$.
This operation is associative, commutative and idempotent.
Moreover, the multiplication $\circ$ is given by the usual composition of maps
\begin{equation} \label{eq:multiplication}
(\phi \circ \psi)(Y) := \phi (\psi(Y))
\end{equation}
for all $Y \subset X_F$.

\noindent Finally, the {\it function} $\phi_x: \mathfrak{X} \rightarrow \mathfrak{X}$ of a chemical $x \in X$ is defined as the sum over all reactions catalyzed by it via
\begin{equation} \label{eq:phi_x}
\phi_x = \sum_{(x,r) \in C} \phi_r.
\end{equation}

Recall that the {\it full transformation semigroup} $\mathcal{T}(A)$ of a finite discrete set $A$ is the set of all maps $\{f:A \rightarrow A\}$, where the semigroup operation $\circ$ is the composition of maps.
The semigroup model of a CRS is a subsemigroup of $\mathcal{T}(\mathfrak{X})$ and is defined as follows.

\begin{defn}  \label{def:semigroupModel}

The {\it semigroup model} $\mathcal{S}$ of a CRS is a subsemigroup of $\mathcal{T}(\mathfrak{X})$ generated by the functions $\{\phi_x\}_{x \in X}$ through the operations of addition and composition, i.e. $\mathcal{S}$ is the smallest subsemigroup of the full transformation semigroup $\mathcal{T}(\mathfrak{X})$ closed under $\circ$ and $+$ that contains $\{\phi_x\}_{x \in X}$ and the zero function, given by $0(Y) = \emptyset$ for all $Y \subset X_F$.
It is denoted by
\begin{equation*}
\mathcal{S} = \langle \phi_x \rangle_{x \in X}.
\end{equation*}
Analogously, the {\it extended semigroup model} of the CRS is generated by the functions $\phi_r$ of all reactions $r \in R$.
This model is denoted as 
\begin{equation*}
\mathcal{S}^R = \langle \phi_r \rangle_{r \in R}.
\end{equation*}
\end{defn}

As subsemigroups of $\mathcal{T}(\mathfrak{X})$, the semigroups $\mathcal{S}$ and $\mathcal{S}^R$ are {\it finite}.
The objects $\mathcal{S}$ and $\mathcal{S}^R$ are called semigroup models, because they are semigroups with respect to both operations $\circ$ and $+$.
The correct description in terms of universal algebra is, however, an algebra of type $(2,2,0)$, cf. \cite{Almeida1995}.
The semigroup model $\mathcal{S}^R$ contains $\mathcal{S}$ as a subalgebra of type $(2,2,0)$ and this will be expressed by saying that $\mathcal{S}$ is a {\it subsemigroup model} of $\mathcal{S}^R$.

\begin{rmk} \label{rmk:PO}
In addition to the two algebraic operations, there is a natural partial order on $\mathcal{S}^R$ and $\mathcal{S}$, given by $\phi \leq \psi \Leftrightarrow \phi(Y) \subset \psi(Y) \text{ for all $Y \subset X_F$}$ for $\phi, \psi \in \mathcal{S}^R$.
\end{rmk}

\noindent There is an important subsemigroup of $\mathcal{S}$ generated by the functions of chemicals in a given set $X_F' \subset X_F$ together with the food set:
\begin{defn} \label{def:function}
For a subset $X'_F$ of $X_F$, the semigroup model $\mathcal{S}(X_F') < \mathcal{S}$ generated by the functions of $X_F'$ is defined as
\begin{equation*}
\mathcal{S}(X'_F) = \langle \phi_x \rangle_{x \in X'}
\end{equation*}
and the {\it function} $\Phi_{X_F'}$ of the set $X_F'$ is given by
\begin{equation*} \label{eq:PhiY}
\Phi_{X_F'} = \sum_{\phi \in \mathcal{S}(X_F')} \phi.
\end{equation*}
\end{defn}

The semigroup models satisfy the following elementary properties.
These properties follow directly from the definitions.
However, if necessary, the proofs for the respective statements on $\mathcal{S}$ can be found in \cite{Loutchko2019}, Section 3.2., and the proofs for $\mathcal{S}^R$ are analogous.

\begin{lem}[Elementary properties of semigroup models] \label{lem:elementaryProperties}
~\newline
\begin{enumerate} [label=(S\arabic*),leftmargin=1cm]
    \item All elements $\phi \in \mathcal{S}^R$ respect the partial order on $\mathfrak{X}$ given by inclusion of sets, i.e.
    \begin{equation*}
    Z \subset Y \subset X_F \implies \phi(Z) \subset \phi(Y).
    \end{equation*}
    \label{eq:generators}
    \item The partial order is compatible with addition and multiplication, i.e. for any $\phi, \phi', \psi, \psi' \in \mathcal{S}^R$ the following relations hold
    \begin{align}
    \label{eq:multCompatibility}
    \phi \leq \psi \textrm{ and } \phi' \leq \psi' &\Rightarrow \phi \circ \phi' \leq \psi \circ \psi',\\
    \label{eq:addCompatibility}
    \phi \leq \psi \textrm{ and } \phi' \leq \psi' &\Rightarrow \phi + \phi' \leq \psi + \psi'.
    \end{align} \label{lemma:PO}
    \item Any $\phi, \psi \in \mathcal{S}$ satisfy 
    \begin{equation*}
    \phi \leq \phi + \psi.
    \end{equation*} \label{eq:order1}
    \item Any $\phi, \phi', \psi \in \mathcal{S}$ such that $\phi \leq \psi$ and $\phi' \leq \psi$ satisfy
    \begin{equation*} 
    \phi + \phi' \leq \psi.
    \end{equation*} \label{eq:order2}
    \item The operations $\circ$ and $+$ on $\mathcal{S}^R$ have the following distributivity properties:
    \begin{align}
    \label{eq:distrRight}
    \phi \circ \chi + \psi \circ \chi &= (\phi + \psi) \circ \chi, \\ 
    \label{eq:distrLeft}
    \chi \circ  \phi+   \chi \circ  \psi &\leq  \chi \circ (\phi + \psi)
    \end{align}
    hold for any $\phi, \psi, \chi \in \mathcal{S}^R$.
    \label{lemma:distr}
    \item
    The right distributivity in Equation (\ref{eq:distrRight}) holds more generally for arbitrary elements $\phi, \psi, \chi \in \mathcal{T}(\mathfrak{X})$.
    \label{prop:distr}
    \item $\Phi_{X'_F}$ is the unique maximal element of $\mathcal{S}(X'_F)$.
    In particular, $\mathcal{S}$ has a unique maximal element, given by $\Phi_{X_F}$. \label{prop:function}
    \item The functions of sets $X''_F \subset X'_F \subset X_F$ satisfy 
    \begin{equation*}
    \Phi_{X''_F} \leq \Phi_{X'_F}.
    \end{equation*} \label{rmk:POtranssitivity}
\end{enumerate} 
\end{lem}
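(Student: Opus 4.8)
The plan is to prove the eight items in dependency order, with the monotonicity statement (S1) serving as the backbone for everything else. For (S1), let $P\subset\mathcal{T}(\mathfrak{X})$ be the set of maps $\phi$ satisfying $Z\subset Y\implies\phi(Z)\subset\phi(Y)$. The defining formula \eqref{eq:phi_r} shows each generator lies in $P$: if $\mathrm{dom}(r)\subset Z\cup F$ and $Z\subset Y$ then $\mathrm{dom}(r)\subset Y\cup F$, so $\phi_r(Z)=\mathrm{ran}(r)\cap X_F=\phi_r(Y)$, and otherwise $\phi_r(Z)=\emptyset\subset\phi_r(Y)$; the zero map lies in $P$ trivially. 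Moreover $P$ is closed under $+$ (pointwise, $\phi(Z)\cup\psi(Z)\subset\phi(Y)\cup\psi(Y)$) and under $\circ$ (from $Z\subset Y$ one gets $\psi(Z)\subset\psi(Y)$ and then $\phi(\psi(Z))\subset\phi(\psi(Y))$). Hence $P$ contains the sub-$(2,2,0)$-algebra of $\mathcal{T}(\mathfrak{X})$ generated by $\{\phi_r\}_{r\in R}$ and the zero map, which is $\mathcal{S}^R$; in particular it contains $\mathcal{S}\subset\mathcal{S}^R$.

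Next I would derive the remaining order and distributivity items from (S1) and the definitions. The additive half of (S2) is a one-line pointwise inclusion; the multiplicative half is the one place (S1) is genuinely used: given $\phi\le\psi$ and $\phi'\le\psi'$, for each $Y$ one has $\phi(\phi'(Y))\subset\psi(\phi'(Y))$ since $\phi\le\psi$, and $\psi(\phi'(Y))\subset\psi(\psi'(Y))$ since $\phi'(Y)\subset\psi'(Y)$ and $\psi\in P$, so $\phi\circ\phi'\le\psi\circ\psi'$. Items (S3) and (S4) are immediate from the definition of $+$, via $\phi(Y)\subset\phi(Y)\cup\psi(Y)$ and $\phi(Y)\cup\phi'(Y)\subset\psi(Y)\cup\psi(Y)=\psi(Y)$ respectively. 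For (S5), the right identity \eqref{eq:distrRight} is the pointwise equality $(\phi+\psi)(\chi(Y))=\phi(\chi(Y))\cup\psi(\chi(Y))$, which uses only the definitions of $+$ and $\circ$; since this manipulation makes no reference to membership in $\mathcal{S}^R$ or to monotonicity, the same computation proves (S6). The left inequality \eqref{eq:distrLeft} then follows from (S1): both $\chi(\phi(Y))$ and $\chi(\psi(Y))$ are contained in $\chi(\phi(Y)\cup\psi(Y))$ because $\chi$ is monotone, hence so is their union.

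For (S7) I would note that $\mathcal{S}(X_F')$ is finite, being a subsemigroup model of the finite semigroup $\mathcal{T}(\mathfrak{X})$, and closed under $+$ by definition; therefore the finite sum $\Phi_{X_F'}=\sum_{\phi\in\mathcal{S}(X_F')}\phi$ is itself an element of $\mathcal{S}(X_F')$. Iterating (S3) and (S4) (induction on the number of summands) shows $\psi\le\Phi_{X_F'}$ for every $\psi\in\mathcal{S}(X_F')$, so $\Phi_{X_F'}$ is the greatest element of $\mathcal{S}(X_F')$; since pointwise inclusion of sets is antisymmetric, the partial order on $\mathcal{S}^R$ is too, and a greatest element is the unique maximal one. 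Taking $X_F'=X_F$ yields the statement for $\mathcal{S}$. Finally, for (S8): from $X_F''\subset X_F'$ one gets $X''\subset X'$, hence $\{\phi_x\}_{x\in X''}\subset\{\phi_x\}_{x\in X'}$, so the subalgebra generated by the former is contained in that generated by the latter, i.e. $\mathcal{S}(X_F'')\subset\mathcal{S}(X_F')$ as subsemigroup models; in particular $\Phi_{X_F''}\in\mathcal{S}(X_F')$, and by (S7) it is dominated by the greatest element $\Phi_{X_F'}$.

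No single step is a serious obstacle — the lemma is essentially bookkeeping — but the point requiring care is precisely the ordering of the argument: the multiplicative compatibility in (S2) and the left distributivity in (S5) both rely silently on (S1), so monotonicity must be established first, and by the genuine closure argument above rather than merely checked on generators. A secondary subtlety worth stating explicitly is that (S6) is obtained from exactly the same pointwise identity as the right distributivity in (S5), and therefore needs neither the finiteness of the model nor any order-theoretic input.
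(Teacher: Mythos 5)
Your proposal is correct, and it is essentially the argument the paper intends: the paper itself gives no written proof here, stating that the properties ``follow directly from the definitions'' and deferring details to the companion article, and your direct verification (checking the generators $\phi_r$ and the zero map, closure of the monotonicity property under $+$ and $\circ$, and then deriving (S2)--(S8) pointwise, with (S1) feeding the multiplicative compatibility and the left distributivity) is exactly that kind of routine check, organized with appropriate care about dependency order.
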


\begin{rmk} \label{rmk:subsemigroup}
Any subCRS $(X',R\mid_{X'},C\mid_{X'},F)$ generated by the set of chemicals $X_F'$ has a semigroup model given by Definition \ref{def:semigroupModel}, which will be denoted by $\mathcal{S}'(X',R\mid_{X'})$.
It is a subsemigroup of the full transformation semigroup $\mathcal{T}(\mathcal{P}(X'_F))$ on the power set of $X'_F$.
Any element $\phi \in \mathcal{S}'(X',R\mid_{X'})$ can be extended to a function $ext(\phi) \in \mathcal{T}(\mathfrak{X})$ via
\begin{equation*}
    ext(\phi)(Y) = \phi(Y \cap X'_F)
\end{equation*}
for $Y \subset X_F$.
By definition, the generators $\{\phi'_x\}_{x \in X'} \subset \mathcal{T}(\mathcal{P}(X'_F))$ of $\mathcal{S}'(X',R\mid_{X'})$ and the generators $\{\phi_x\}_{x \in X'} \subset \mathcal{T}(\mathfrak{X})$ of $\mathcal{S}(X'_F)$ satisfy ${ext(\phi'_x) \leq \phi_x}$ for all $x \in X'$.
Together with the property \ref{lemma:PO} this yields the inequality
\begin{align} \label{eq:maxInequality}
    ext(\Phi'_{X'_F}) \leq \Phi_{X'_F}
\end{align}
for the maximal functions $\Phi'_{X'_F}$ and $\Phi_{X'_F}$ of $\mathcal{S}'(X',R\mid_{X'})$ and $\mathcal{S}(X'_F)$.
\end{rmk}

\noindent This finishes the summary of the elementary properties of the semigroup models.
In the next section, a representation of the semigroup elements, which is well suited to deal with the condition \ref{cond:F} in food generated CRS, is constructed.

\section{Semigroup Models as Decorated Rooted Trees} \label{sec:prelim}

This section is dedicated to the construction of a representation of elements of $\mathcal{S}$ as decorated rooted trees.
It forms the technical basis for the proofs in the next section.
Albeit the main idea of this section is rather straightforward, the verification of all the claimed properties requires some care.
Therefore, the reader might prefer to skip this section up until Theorem \ref{thm:function} during the first reading.

The general idea developed in this section is as follows:
The edges of the rooted trees are labeled by functions in a subset of the full transformation semigroup $\mathcal{T}(\mathfrak{X})$. 
Each vertex is labelled by the sum of the functions on the outgoing edges multiplied with the functions of the respective head vertices (Definition \ref{def:tree}, see Fig. \ref{fig:tree} for an illustration).
Moreover, there are operations of addition and multiplication (Definition \ref{def:trees} and Fig. \ref{fig:tree_operations}) on the set of decorated rooted trees that are compatible with the addition and multiplication of the semigroup elements on the root (Lemma \ref{lem:tree_homo}).
The addition of two trees is performed by identifying their roots, and the multiplication is given by replacing the leaves of first tree with copies the second tree.
Finally, to establish a relation to the semigroup models $\mathcal{S}^R$ and $\mathcal{S}$, the edge labels are chosen from the generating sets $\{\phi_r\}_{r \in R} \cup \{0\}$ and $\{\phi_x\}_{x \in X} \cup \{0\}$, respectively. 
This idea is also sketched in the mathematical outline in the introductory Section \ref{sec:intro}.
The main Theorem \ref{thm:function} of this section establishes that both classes of decorated rooted trees are compatible with the algebraic structure of the semigroup models.
The merit of this construction is that the F and RAF properties of a CRS can be reformulated in terms of decorated rooted trees and then directly cast into the language of semigroup models (Lemma \ref{lem:treesFproperty}).\\

The following notations and conventions with regard to rooted trees will be used.
Let $T=(V,E,t)$ be a rooted tree with vertex set $V$, edge set $E \subset V \times V$ and root $t \in V$.
Edges $(v,w) \in E$ are directed from $v$ to $w$.
Here, $v$ is called the {\it tail} of $e$ and $w$ is its {\it head}.
For each vertex $v \in V$, let $\textrm{ch}(v) \subset V$ denote the set of {\it children} of $V$, which is defined as $\textrm{ch}(v) := \{w \in V \textrm{ such that }(v,w) \in E \}$.
Also, denote by $T_v$ the subtree of $T$ rooted at the vertex $v$.
The {\it level} $\textrm{lv}(v)$ of a vertex $v$ is the length of the path from the root to $v$ and $\textrm{lv}_n(T) \subset V$ denotes the set of all vertices of a given level $n$.
Moreover, the non-standard notation $\textrm{elv}_n(T) \subset E$ denotes the set of edges of {\it level} $n$, which are all the edges whose head vertex has level $n$.
The notation $\mathrm{ht}(T)$ denotes the {\it height} of the tree, i.e. the length of the longest path from the root to a leaf.
Finally, $\mathrm{lf}(T)$ is the set of all leaves of $T$, which is given by $\mathrm{lf}(T) := \{v \in V \textrm{ such that }\textrm{ch}(v) = \emptyset \}$.
An edge $(v,w) \in E$ is said to be {\it terminal} if the vertex $w$ is a leaf.

\begin{defn} \label{def:tree}
For any subset $A \subset \mathcal{T}(\mathfrak{X})$ of the full transformation semigroup $\mathcal{T}(\mathfrak{X})$, an {\it $A$-decorated rooted tree} $T = (A,V,E,t,\omega_V,\omega_E)$ is a finite rooted tree with vertex set $V$, edge set $E$, a root $t \in V$ and two maps
\begin{align*}
    \omega_V &: V \longrightarrow \mathcal{T}(\mathfrak{X}) \\
    \omega_E &: E \longrightarrow A,
\end{align*}
\noindent where $\omega_V$ is recursively given by
\begin{align} \label{eq:omega}
     \omega_V(v) = \begin{cases}
               \textrm{id}\mid_{\mathfrak{X}}  &\textrm{if }v \in \textrm{lf}(T) \\
               \sum_{w \in \textrm{ch}(v)} \omega_E((v,w)) \circ \omega_V(w) \qquad &\textrm{else}.
            \end{cases}
\end{align}
\noindent The addition and multiplication in the definition of $\omega_V$ takes place inside $\mathcal{T}(\mathfrak{X})$ as previously defined (cf. Equations (\ref{eq:addition}) and (\ref{eq:multiplication})).
Fig. \ref{fig:tree} illustrates this construction.
\end{defn}

\begin{figure}[ht]
  \centering
  \includegraphics[scale=0.25]{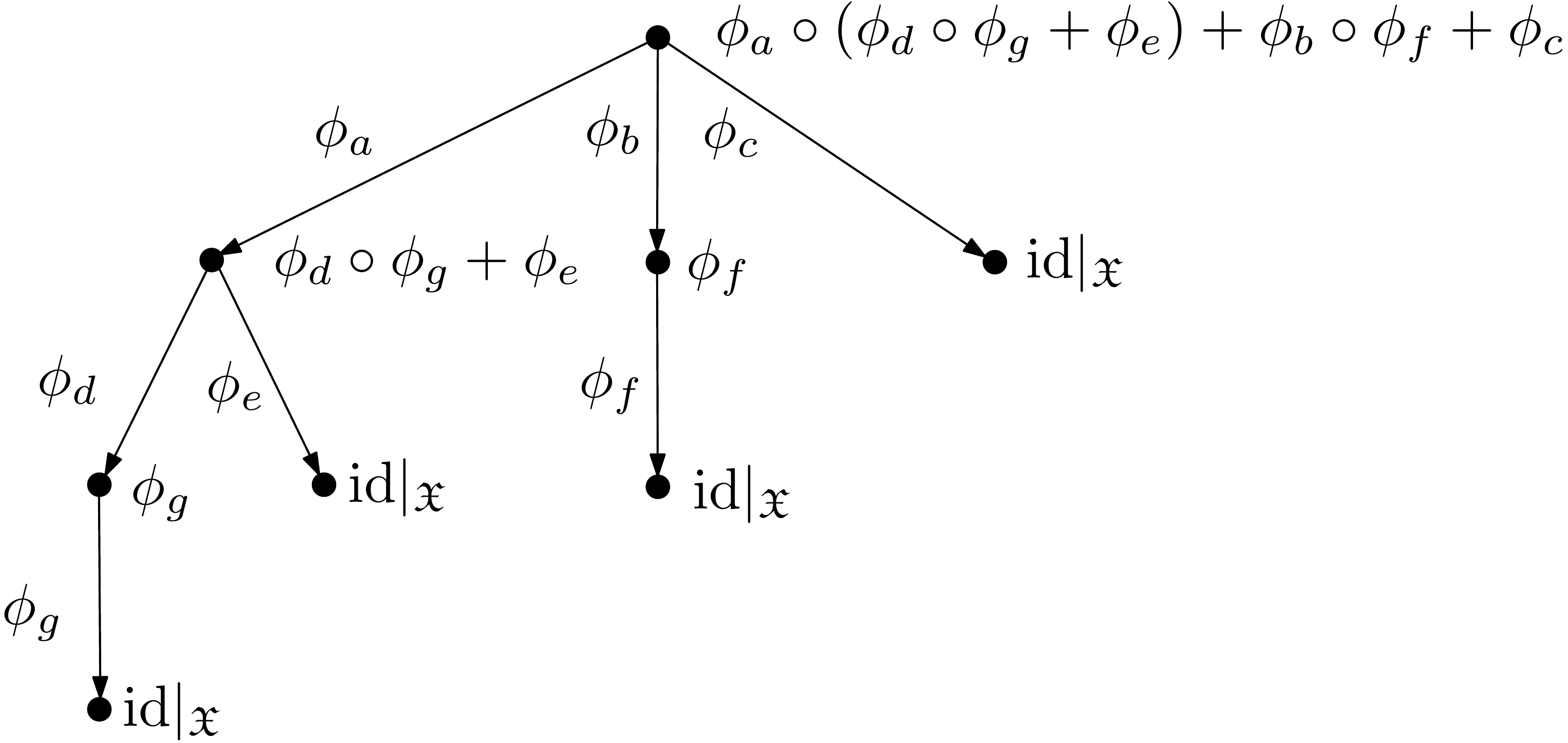}
  \caption{
  Example of a decorated rooted tree with decorations from the generating set $A = \{ \phi_x \}_{x \in X_F} \cup \{0\}$ of a semigroup model $\mathcal{S} = \langle \phi_x \rangle _{x \in X_F} \subset  \mathcal{T}(\mathfrak{X})$.
  The labels of the edges determine the labels on the vertices recursively:
  At each vertex, a sum over the labels of its children, multiplied by the labels on the respective connecting edges, is taken.
  The edges are labeled to the left of the respective edge and the resulting labels of the vertices are on the right of the respective vertex.
  The root is labelled by the function $\phi_a \circ (\phi_d \circ \phi_g + \phi_e) + \phi_b \circ \phi_f + \phi_c \in \mathcal{S}$.
  }
  \label{fig:tree}
\end{figure}

Decorated rooted trees will be referred to as trees.
For the set of edge labels $A \subset \mathcal{T}(\mathfrak{X})$, denote the set of all $A$-decorated trees by $\mathfrak{T}(A)$.
Also denote the set of all $A$-decorated trees of height $n$ by $\mathfrak{T}(A)^n$ and of height at most $n$ by $\mathfrak{T}(A)^{\leq n}$.
A subtree is defined as follows.
\begin{defn} \label{def:rooted_subtree}
A decorated rooted tree $T' = (A,V',E',t',\omega'_{V'},\omega'_{E'}) \in  \mathfrak{T}(A)$ is a {\it subtree} of $T = (A,V,E,t,\omega_V,\omega_E) \in  \mathfrak{T}(A)$ if there exists an injective map of rooted trees 
\begin{equation*}
    g: (V',E',t') \rightarrow (V,E,t),
\end{equation*}
which respects the labels on the edges, i.e.
\begin{equation*}
    \omega'_{E'}(e') = \omega_E(g(e'))
\end{equation*}
holds for all $e' \in E'$.
\end{defn}

The set $\mathfrak{T}(A)$ is equipped with two operations:
Loosely speaking, given two trees $T_1, T_2 \in \mathfrak{T}(A)$, their sum is defined by identifying the roots of $T_1$ and $T_2$ and their product by replacing each leaf of $T_1$ with a copy of $T_2$.

\begin{defn} \label{def:trees}
Let $T_1,T_2 \in \mathfrak{T}(A)$ be two $A$-decorated rooted trees given by the data $T_1 = (A,V_1,E_1,t_1,\omega_{V1},\omega_{E1})$ and $T_2 = (A,V_2,E_2,t_2,\omega_{V2},\omega_{E2})$.
Define the tree $T^+ := T_1 + T_2$ with data $T^+ = (A,V^+,E^+,t^+,\omega_V^+,\omega_E^+)$ by identifying the roots of the two trees, i.e. by
\begin{equation*}
    V^+ := \faktor{V_1 \sqcup V_2 }{t_1 \sim t_2}.
\end{equation*}
\noindent There is a canonical map
\begin{equation*}
   \epsilon^+: (V_1 \times V_1) \sqcup (V_2 \times V_2) \hookrightarrow (V_1 \sqcup V_2) \times (V_1 \sqcup V_2) \twoheadrightarrow V^+ \times V^+.
\end{equation*}
\noindent The edge set $E^+$ is defined as
\begin{equation*}
    E^+ := \epsilon^+(E_1 \sqcup E_2)
\end{equation*}
with the decoration map
 \begin{align} \label{eq:E+}
     \omega_E^+(e) = \begin{cases}
               \omega_{E1}((\epsilon^+)^{-1}(e))  &\textrm{if }(\epsilon^+)^{-1}(e) \in E_1\\
               \omega_{E2}((\epsilon^+)^{-1}(e))  &\textrm{if }(\epsilon^+)^{-1}(e) \in E_2.
            \end{cases}
 \end{align}
\noindent Because the restriction of $\epsilon^+$ to $E_1 \sqcup E_2$ is one-to-one, this map is well-defined.
The map $\omega_V^+$ is given by the relation (\ref{eq:omega}) with $E^+$ and $\omega_E^+$ instead of $E$ and $\omega_E$.
The construction is illustrated in Fig. \ref{fig:tree_operations}A.

Moreover, define the tree $T^{\circ} := T_1 \circ T_2$ with data $T^{\circ} = (A,V^{\circ},E^{\circ},t^{\circ},\omega_V^{\circ},\omega_E^{\circ})$ by replacing each leaf of $T_1$ with a copy of $T_2$.
The data on $T^{\circ}$ is given as follows.
\begin{equation*}
    V^{\circ} := \faktor{V_1 \sqcup \coprod_{l \in \textrm{lf}(T_1)} V_2 }{\sim },
\end{equation*}
\noindent where the equivalence relation $\sim$ relates each leaf $l \in \textrm{lf}(T_1) \subset V_1$ with the root $t_2 \in V_2$ of the respective copy of $V_2$ indexed by $l$.
Again, there is a canonical map
\begin{equation*}
   \epsilon^{\circ}: (V_1 \times V_1) \sqcup \coprod_{l \in \textrm{lf}(T_1)} (V_2 \times V_2) \rightarrow V^{\circ} \times V^{\circ}
\end{equation*}
\noindent and the edge set is defined as $E^{\circ} := \epsilon^{\circ}(E_1  \sqcup \coprod_{l \in \textrm{lf}(T_1)} E_2)$.
The restriction of $\epsilon^{\circ}$ to $E_1  \sqcup \coprod_{l \in \textrm{lf}(T_1)} E_2$ is one-to-one such that $\omega_E^{\circ}$ is defined analogously to $\omega_E^+$ as in (\ref{eq:E+}).
The map $\omega_V^{\circ}$ is defined by the relation (\ref{eq:omega}) using $E^{\circ}$ and $\omega_E^{\circ}$ instead of $E$ and $\omega_E$.
This construction is illustrated in Fig. \ref{fig:tree_operations}B.

The set $\mathfrak{T}(A)$, together with the two operations $\circ$ and $+$, is referred to as the {\it tree algebra} $\mathfrak{T}(A)$.
\end{defn}

\begin{figure}[ht]
  \centering
  \includegraphics[scale=0.25]{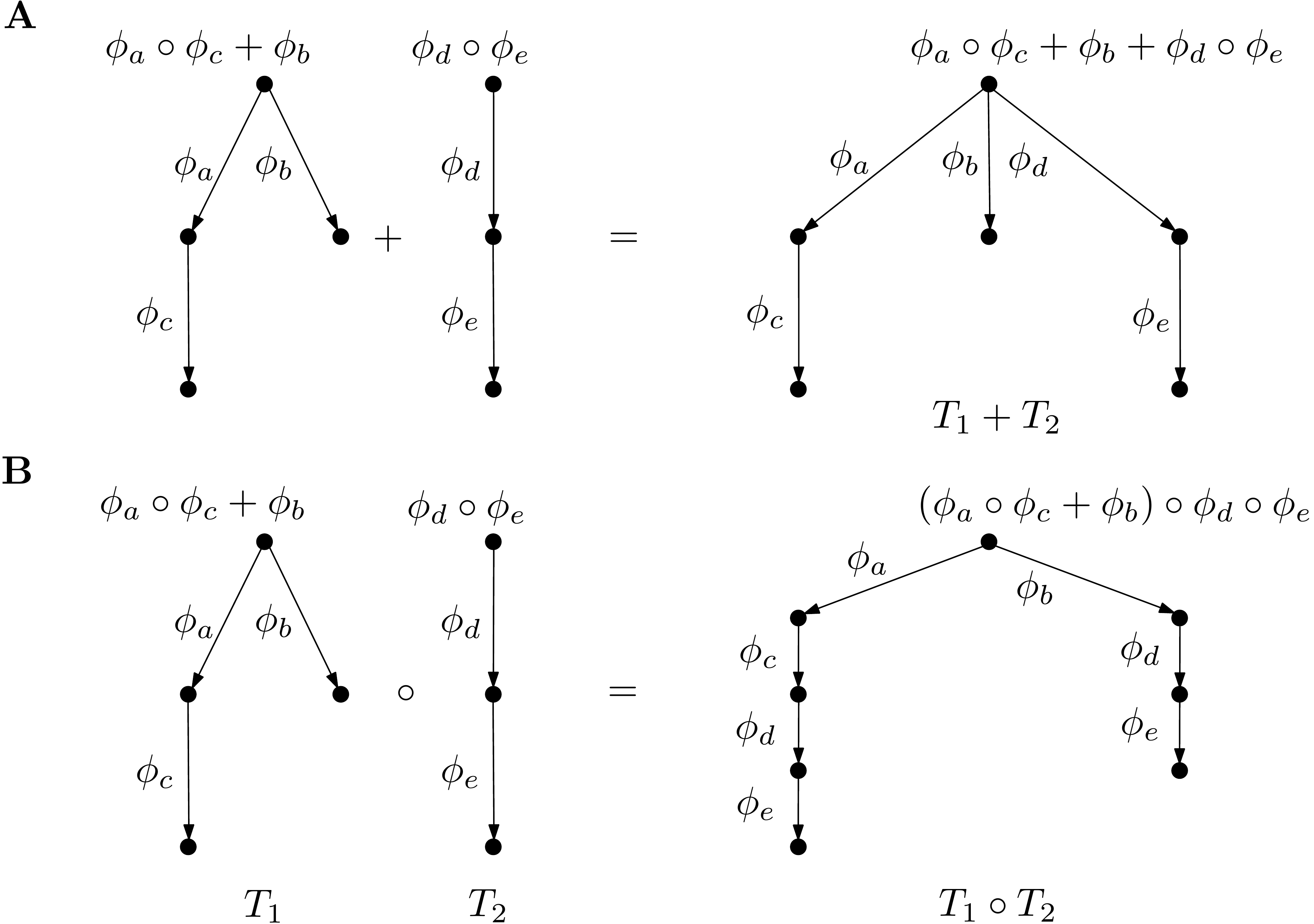}
  \caption{{\bf A} Addition of two trees $T_1$ and $T_2$: The roots of both trees are identified and the labels on all edges of both trees are retained.
  All vertex labels are given by the relation (\ref{eq:omega}).
  {\bf B} Multiplication of two trees $T_1$ and $T_2$: Each leaf of $T_1$ is replaced with a copy of $T_2$.
  Thereby the edge labels from the original trees are retained, yielding the vertex labels by relation (\ref{eq:omega}).
  Only the root labels are shown in the figure (To arrive at the form of the root label of $T_1 \circ T_2$ given here, the right-distributivity, cf. propery \ref{prop:distr}, is used).
  }
  \label{fig:tree_operations}
\end{figure}

\begin{rmk} \label{rmk:distributivity}
It follows directly from the definition of the addition and multiplication of trees that the operations are associative.
Moreover, the addition is commutative and the right distributivity
\begin{equation*}
    (T_1 + T_2) \circ T_3 = T_1 \circ T_3 + T_2 \circ T_3 
\end{equation*}
holds by construction.
\end{rmk}

The algebraic structure on $\mathfrak{T}(A)$ thus defined is compatible with the algebraic structure on $\mathcal{T}(\mathfrak{X})$ by mapping a tree $T \in \mathfrak{T}(A)$ to the label on its root
\begin{align}
    \label{eq:ev}
    \mathfrak{T}(A) &\xlongrightarrow{ev} \mathcal{T}(\mathfrak{X}) \\
    \nonumber
    (A,V,E,t,\omega_V,\omega_E) &\longmapsto \omega_V(t).
\end{align}

\begin{lem} \label{lem:tree_homo}
The map $ev: \mathfrak{T}(A) \longrightarrow \mathcal{T}(\mathfrak{X})$ is a homomorphism with respect to addition $+$ and multiplication $\circ$.
\end{lem}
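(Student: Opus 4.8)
The plan is to check the two defining identities of a homomorphism, namely $ev(T_1 + T_2) = ev(T_1) + ev(T_2)$ and $ev(T_1 \circ T_2) = ev(T_1) \circ ev(T_2)$, directly from the recursive rule (\ref{eq:omega}) for the vertex decoration $\omega_V$. The only ingredient beyond routine bookkeeping is the right distributivity of $\circ$ over $+$, which by property \ref{prop:distr} holds for arbitrary elements of $\mathcal{T}(\mathfrak{X})$; the associativity of $\circ$ and the associativity, commutativity and idempotency of $+$ are used without comment. I would first record the \emph{locality} observation that for any decorated tree $T$ and any vertex $v$ one has $\omega_V(v) = ev(T_v)$, where $T_v$ carries the decorations inherited from $T$. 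This is a one-line induction on $\mathrm{ht}(T_v)$: both sides are $\mathrm{id}\mid_{\mathfrak{X}}$ at a leaf, and (\ref{eq:omega}) applied at $v$ only refers to the children of $v$ and to the edge labels strictly below $v$, which agree in $T$ and in $T_v$. This is what allows the labels on the ``untouched'' parts of $T_1 + T_2$ and $T_1 \circ T_2$ to be read directly off $T_1$ and $T_2$.

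For additivity, assume $T_1$ and $T_2$ each have at least one edge (the one-vertex tree evaluates to $\mathrm{id}\mid_{\mathfrak{X}}$ and is treated separately). By the construction of $T_1 + T_2$ in Definition \ref{def:trees}, the children of the identified root are the images under $\epsilon^+$ of $\mathrm{ch}(t_1) \sqcup \mathrm{ch}(t_2)$, the incident edges keep their labels, and the subtree of $T_1 + T_2$ rooted at the image of $w \in \mathrm{ch}(t_i)$ is a decorated copy of the subtree of $T_i$ rooted at $w$. Evaluating (\ref{eq:omega}) at the root, splitting the sum over the two families of children, and using the locality observation to rewrite each summand in terms of $\omega_{Vi}$, the sum becomes
\[
\sum_{w \in \mathrm{ch}(t_1)} \omega_{E1}((t_1,w)) \circ \omega_{V1}(w) + \sum_{w \in \mathrm{ch}(t_2)} \omega_{E2}((t_2,w)) \circ \omega_{V2}(w) = \omega_{V1}(t_1) + \omega_{V2}(t_2) = ev(T_1) + ev(T_2).
\]

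For multiplicativity, I would prove by induction over the vertices of $T_1$, working from its leaves toward its root, the stronger statement that the label carried in $T_1 \circ T_2$ by the image of a vertex $v \in V_1$ equals $\omega_{V1}(v) \circ ev(T_2)$. If $v$ is a leaf of $T_1$, then its image is the root of the copy of $T_2$ glued at $v$, whose label is $ev(T_2)$ by the locality observation, and this equals $\mathrm{id}\mid_{\mathfrak{X}} \circ ev(T_2) = \omega_{V1}(v) \circ ev(T_2)$. If $v$ is internal, then, since the copies of $T_2$ are attached only at the leaves of $T_1$ and hence strictly below $v$, the children of the image of $v$ in $T_1 \circ T_2$ are exactly the images of $\mathrm{ch}(v)$ with unchanged edge labels; so (\ref{eq:omega}) together with the induction hypothesis gives
\[
\sum_{w \in \mathrm{ch}(v)} \omega_{E1}((v,w)) \circ \bigl( \omega_{V1}(w) \circ ev(T_2) \bigr) = \sum_{w \in \mathrm{ch}(v)} \bigl( \omega_{E1}((v,w)) \circ \omega_{V1}(w) \bigr) \circ ev(T_2) = \omega_{V1}(v) \circ ev(T_2),
\]
where the first equality is associativity of $\circ$ and the second is the right distributivity of property \ref{prop:distr}, which pulls the common right factor $ev(T_2)$ out of the sum. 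Taking $v = t_1$, whose image is the root of $T_1 \circ T_2$, gives $ev(T_1 \circ T_2) = \omega_{V1}(t_1) \circ ev(T_2) = ev(T_1) \circ ev(T_2)$.

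I expect the main obstacle to be conceptual rather than computational: pinpointing the one place where the two operations on $\mathcal{T}(\mathfrak{X})$ genuinely interact, which is the single use of right distributivity $\phi \circ \chi + \psi \circ \chi = (\phi + \psi) \circ \chi$ in the multiplicativity step. The rest --- verifying that $\epsilon^+$ and $\epsilon^\circ$ transport children, subtrees and edge labels as asserted, and disposing of the degenerate one-vertex trees --- is routine unwinding of Definition \ref{def:trees}. It is worth noting that left distributivity is never invoked (it holds only as the inequality (\ref{eq:distrLeft})), which is exactly why the product of trees is defined by substituting $T_2$ into the \emph{leaves} of $T_1$: a left-handed variant, substituting into the root side, would not be compatible with $ev$.
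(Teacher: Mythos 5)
Your proof is correct and follows essentially the same route as the paper's: additivity by splitting the root sum in (\ref{eq:omega}) over the children inherited from $T_1$ and $T_2$, and multiplicativity by a leaf-to-root induction showing the label of (the image of) each $v \in V_1$ in $T_1 \circ T_2$ equals $\omega_{V1}(v) \circ ev(T_2)$, with the single essential use of the right distributivity \ref{prop:distr}. Your explicit ``locality'' observation $\omega_V(v) = ev(T_v)$ is only a slightly more formal packaging of what the paper uses implicitly, so there is no substantive difference.
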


\begin{proof}
The notation from Definition \ref{def:trees} is used.
Let $T_1,T_2 \in \mathfrak{T}(A)$ be two $A$-decorated rooted trees.

Let $T^+ = T_1 + T_2$.
By construction of $T^+$, the projection $\pi: V_1 \sqcup V_2 \rightarrow V^+$ is injective on all vertices except on the root.
Moreover, $\pi$ respects the level of a vertex, i.e. $\textrm{lv}(v) = \textrm{lv}(\pi(v))$, and the decoration function for vertices $v$ of level 1 satisfies
\begin{align*}
    \omega_V^+(v) =
    \begin{cases}
               \omega_{V1}(\pi^{-1}(v))  &\textrm{if }\pi^{-1}(v) \in V_1\\
               \omega_{V2}(\pi^{-1}(v))  &\textrm{if }\pi^{-1}(v) \in V_2.
    \end{cases}
\end{align*}
This yields the homomorphism property for addition
\begin{align*}
    \omega_V^+(t^+) &= \sum_{v \in \textrm{lv}_1(T^+)} \omega_E^+ ((t^+,v)) \circ \omega_V^+(v) \\
    &= \sum_{\substack{v \in \textrm{lv}_1(T^+), \\ \pi^{-1}(v) \in V_1}} \omega_{E1}((t_1,v)) \circ \omega_{V1}(\pi^{-1}(v)) + \\ 
    & +\sum_{\substack{v \in \textrm{lv}_1(T^+), \\ \pi^{-1}(v) \in V_2}}\omega_{E2}((t_2,v)) \circ \omega_{V2}(\pi^{-1}(v)) \\
    &= \omega_{V1}(t_1) + \omega_{V2}(t_2).
\end{align*}

Let $T^{\circ} = T_1 \circ T_2$.
By construction, $T_1$ is a subtree of $T^{\circ}$ and thus the respective vertices and edges of $T^{\circ}$ and $T_1$ can be identified.
It is now shown inductively that for all $v \in T_1$, considered as a subtree of $T^{\circ}$, the relation
\begin{equation} \label{eq:recursion}
    \omega_V^{\circ}(v) =  \omega_{V1}(v) \circ \omega_{V2}(t_2)
\end{equation}
\noindent holds.
For all leaves $l \in \textrm{lv}(T_1)$, the relation
\begin{equation*}
    \omega_V^{\circ}(l) = \omega_{V2}(t_2) = \textrm{id}\mid_{\mathfrak{X}} \circ \omega_{V2}(t_2) = \omega_{V1}(l) \circ \omega_{V2}(t_2)
\end{equation*}
holds by construction.
For the induction from vertex level $n$ (with $1 \leq n \leq \textrm{ht}(T_1)$ to $n-1$, let $v \in V_1 \setminus \textrm{lf}(T_1)$ be a vertex of level $n-1$.
The recursion (\ref{eq:omega}) yields
\begin{align*}
    \omega^{\circ}_V(v) &= \sum_{w \in \textrm{ch}(v) \cap V_1} \omega^{\circ}_E((v,w)) \circ \omega^{\circ}_V(w) \\
    &= \sum_{w \in \textrm{ch}(v) \cap V_1} \omega_{E1}((v,w)) \circ \omega_{V1}(w) \circ  \omega_{V2}(t_2) \\
    &=  \left[\sum_{w \in \textrm{ch}(v) \cap V_1} \omega_{E1}((v,w)) \circ \omega_{V1}(w) \right] \circ  \omega_{V2}(t_2) \\
    &= \omega_{V1}(v) \circ \omega_{V2}(t_2),
\end{align*}
where $\omega^{\circ}_E((v,w)) = \omega_{E1}((v,w))$ holds by definition, the second line is the induction hypothesis, and the third line follows from the right distributivity of the operations, cf. property \ref{prop:distr}.
In particular, the homomorphism property $\omega_V^{\circ}(t^+) =  \omega_{V1}(t_1) \circ \omega_{V2}(t_2)$ holds.
\end{proof}

Of particular importance are the trees decorated by the generating sets $\{ \phi_x \}_{x \in X} \cup \{0\}$ and $\{ \phi_r \}_{r \in R} \cup \{0\}$ of $\mathcal{S}^R$ and $\mathcal{S}$.
The respective tree algebras are denoted by
\begin{align*}
    \mathfrak{T} &:= \mathfrak{T}(\{ \phi_x \}_{x \in X} \cup \{0\}), \\
    \mathfrak{T}^R &:= \mathfrak{T}(\{ \phi_r \}_{r \in R} \cup \{0\}).
\end{align*}

\noindent There is a map with nice algebraic properties between the tree algebras
\begin{align*}
    \mathfrak{T} &\xlongrightarrow{\tau} \mathfrak{T}^R,
\end{align*}
\noindent which is defined based on the relation $\phi_x = \sum_{(x,r) \in C} \phi_r$ between the edge labels.
First, $\tau$ maps the trivial tree with one vertex in $\mathfrak{T}$ to the trivial tree in $\mathfrak{T}^R$.
Next, let $T_{\phi}$ be the decorated rooted tree with one edge which is labelled by $\phi$.
The tree $T_{\phi}$ is said to be the {\it atomic tree with label $\phi$}.
For an atomic tree $T_{\phi_x} \in \mathfrak{T}$, the label function $\phi_x$ can be uniquely decomposed as a sum of functions corresponding to reactions according to its definition, cf. Equation (\ref{eq:phi_x}):
\begin{equation*}
    \phi_x = \sum_{(x,r) \in C} \phi_r.
\end{equation*}
Thus, $\tau(T_{\phi_x})$ is defined as the sum of the corresponding atomic trees
\begin{equation*}
    \tau(T_{\phi_x}) := \sum_{(x,r) \in C} T_{\phi_r}.
\end{equation*}
A tree $T \in \mathfrak{T}$ of height one can be written as a finite sum of atomic trees, i.e. $T = \sum_{j=1}^m T_{\phi_{x_j}}$, and the map $\tau$ on $\mathfrak{T}^1$ is defined as
\begin{equation*} 
    \tau(T) := \sum_{j=1}^m \tau( T_{\phi_{x_j}}).
\end{equation*}
An arbitrary tree $T \in \mathfrak{T}^n$ of height $n$ can be written as $T = \sum_{j=1}^m  T_{\phi_{x_j}} \circ T_j $ for atomic trees $T_{\phi_{x_j}}$ and trees $T_j \in \mathfrak{T}^{\leq (n-1)}$ of height $\leq (n-1)$.
The map $\tau$ is defined recursively as
\begin{align} \label{eq:def_tau}
    \tau(T) := \sum_{j=1}^m \tau( T_{\phi_{x_j}}) \circ \tau(T_j).
\end{align}
The substitution process is illustrated in Fig. \ref{fig:Tr}A and an example of the construction $T \mapsto \tau(T)$ for the CRN in Fig. \ref{fig:Tr}B is given in Fig. \ref{fig:Tr}C.

\begin{figure}[ht]
  \centering
  \includegraphics[scale=0.25]{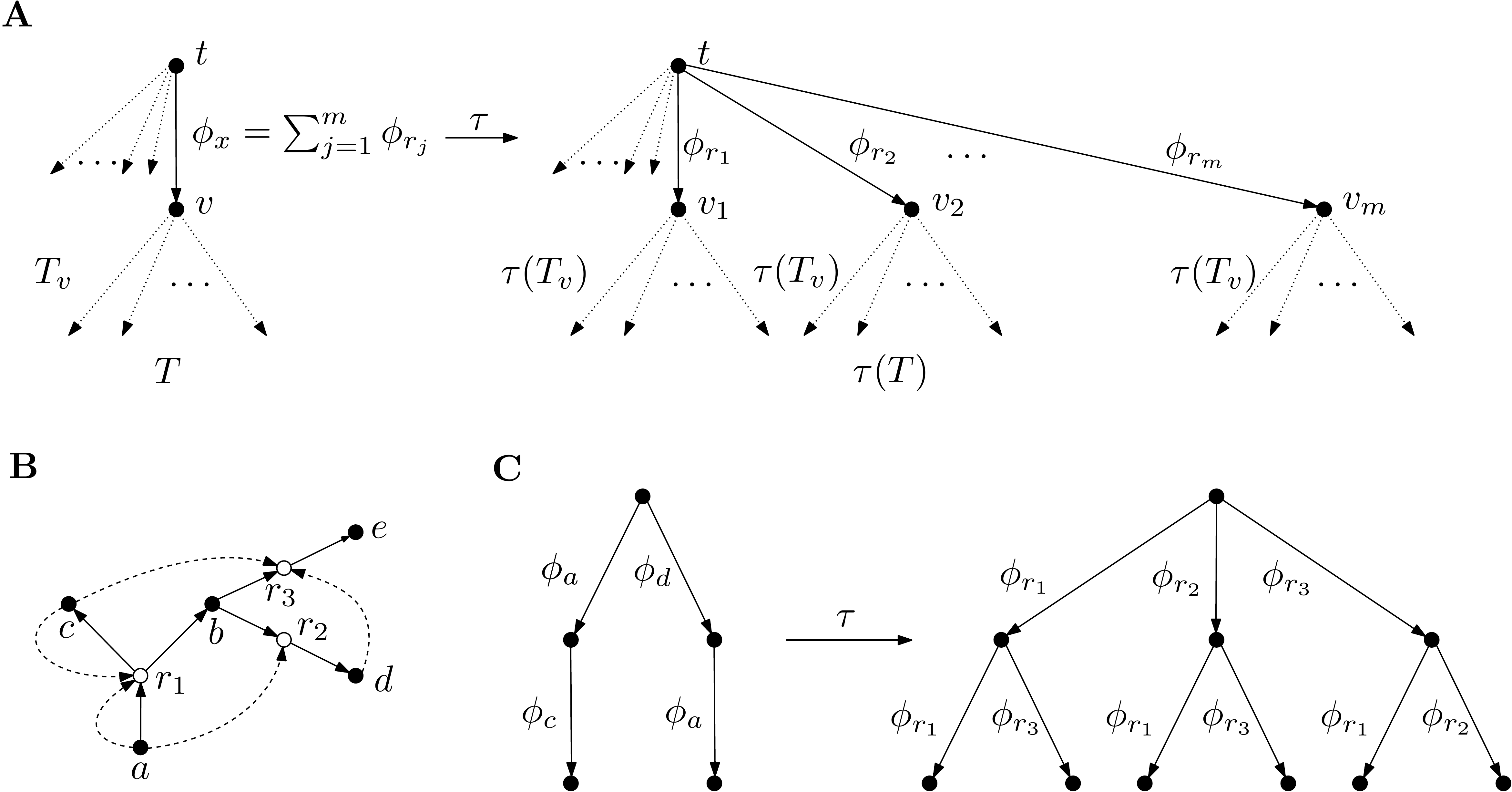}
  \caption{Illustration of the construction of the map $\tau: \mathfrak{T} \rightarrow \mathfrak{T}^R$.
  {\bf A} Illustration of the general procedure of replacing an edge of $T \in \mathfrak{T}$ with label $\phi_x$ by an edge for each summand in $\phi_x = \sum_{(x,r) \in C} \phi_r = \sum_{j=1}^m \phi_{r_j}$ with labels $\phi_{r_j}$.
  This is performed recursively starting with the terminal edges and working upwards toward the root.
  {\bf B} An example CRS with the functions of chemicals given by $\phi_a = \phi_{r_1} + \phi_{r_2}$,  $\phi_c = \phi_{r_1} + \phi_{r_3}$ and $\phi_d = \phi_{r_3}$.
  {\bf C} The map $\tau$ applied to the tree $T$ on the left for the CRS featured in B.
  }
  \label{fig:Tr}
\end{figure}

\begin{lem} \label{lem:tau_homo}
    The map $\tau: \mathfrak{T} \rightarrow \mathfrak{T}^R$ defined above is a homomorphism with respect to the addition and multiplication of trees.
    Moreover, the label of the root $\omega_V(t)$ is invariant under $\tau$ for any tree $T \in \mathfrak{T}$, i.e., using the evaluation map defined in (\ref{eq:ev}), the relation 
    \begin{align*}
    ev(T) = ev(\tau(T))    
    \end{align*}
    holds.
\end{lem}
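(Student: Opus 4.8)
Throughout I would work with the \emph{root-edge decomposition} of a tree $T\in\mathfrak{T}$: if $T$ is not the trivial one-vertex tree, let the children of its root be $w_1,\dots,w_m$, joined to the root by edges carrying labels $\ell_1,\dots,\ell_m\in\{\phi_x\}_{x\in X}\cup\{0\}$, and let $T_{w_j}$ be the subtree of $T$ rooted at $w_j$, so that $\mathrm{ht}(T_{w_j})\le\mathrm{ht}(T)-1$; then $T=\sum_{j=1}^m T_{\ell_j}\circ T_{w_j}$, where $T_{\ell_j}$ is the atomic tree with label $\ell_j$. This decomposition is unique up to the order of the summands, so the recursive prescription (\ref{eq:def_tau}) does define $\tau$ unambiguously (extending it to a $0$-labelled atomic tree by $\tau(T_0)=T_0$, the empty-sum instance of $\tau(T_{\phi_x})=\sum_{(x,r)\in C}T_{\phi_r}$, and letting $\tau$ fix the trivial tree). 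The first step is to record how this decomposition transforms under the tree operations: the root-edges of $T_1+T_2$ are precisely those of $T_1$ and those of $T_2$ taken together (adding trees merges the edge-lists at the shared root), and the root-edge decomposition of $T_1\circ T_2$ is $\sum_j T_{\ell_j}\circ(T_{w_j}\circ T_2)$ whenever that of $T_1$ is $\sum_j T_{\ell_j}\circ T_{w_j}$ (composition leaves the root of $T_1$ and its outgoing edges untouched and post-composes each hanging subtree with $T_2$); the latter identity is also just $\big(\sum_j T_{\ell_j}\circ T_{w_j}\big)\circ T_2$ rewritten via the associativity and right-distributivity of the tree operations from Remark \ref{rmk:distributivity}.

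Additivity of $\tau$ is then immediate: since $\tau(T)$ is by definition the sum of $\tau(T_{\ell})\circ\tau(T_w)$ over the root-edges of $T$, and the root-edges of $T_1+T_2$ are those of $T_1$ and $T_2$ combined, $\tau(T_1+T_2)=\tau(T_1)+\tau(T_2)$ (with the degenerate case of a trivial summand checked directly). Multiplicativity I would prove by induction on $\mathrm{ht}(T_1)$. The base case $\mathrm{ht}(T_1)=0$ reads $T_1\circ T_2=T_2$ with $\tau(T_1)$ the trivial tree, so both sides equal $\tau(T_2)$, and the case of an atomic $T_1=T_\ell$ is one instance of (\ref{eq:def_tau}). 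For the inductive step, take the root-edge decomposition $T_1=\sum_j T_{\ell_j}\circ T_{w_j}$ and compute, using in turn the second structural fact, the definition of $\tau$, the induction hypothesis applied to each $T_{w_j}$ (all of smaller height), and right-distributivity in $\mathfrak{T}^R$:
\begin{align*}
\tau(T_1\circ T_2)&=\sum_j\tau(T_{\ell_j})\circ\tau(T_{w_j}\circ T_2)=\sum_j\tau(T_{\ell_j})\circ\tau(T_{w_j})\circ\tau(T_2)\\
&=\Big(\sum_j\tau(T_{\ell_j})\circ\tau(T_{w_j})\Big)\circ\tau(T_2)=\tau(T_1)\circ\tau(T_2).
\end{align*}

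Finally, the identity $ev(T)=ev(\tau(T))$ follows by induction on $\mathrm{ht}(T)$: for the trivial tree both sides are $\textrm{id}\mid_{\mathfrak{X}}$, for a $0$-labelled atomic tree both are $0$, and for an atomic tree $T_{\phi_x}$ one has $ev(\tau(T_{\phi_x}))=ev\big(\sum_{(x,r)\in C}T_{\phi_r}\big)=\sum_{(x,r)\in C}\phi_r=\phi_x=ev(T_{\phi_x})$ by (\ref{eq:phi_x}); for a general $T=\sum_j T_{\ell_j}\circ T_{w_j}$ one applies Lemma \ref{lem:tree_homo} to both evaluation maps and the induction hypothesis to each $T_{w_j}$. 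Equivalently, now that $\tau$ is a homomorphism, $ev$ and $ev\circ\tau$ are homomorphisms $\mathfrak{T}\to\mathcal{T}(\mathfrak{X})$ agreeing on the trivial tree and all atomic trees, which generate $\mathfrak{T}$ under $+$ and $\circ$. The only genuinely delicate part of the argument is pinning down the two structural facts about the root-edge decomposition; once those are in hand, together with right-distributivity of the tree operations (Remark \ref{rmk:distributivity}) and, through Lemma \ref{lem:tree_homo}, of the transformation operations (property \ref{prop:distr}), every homomorphism identity drops out of the recursive definition of $\tau$ by routine induction.
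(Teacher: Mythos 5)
Your argument is correct and follows essentially the same route as the paper's proof: decompose each tree as $\sum_j T_{\phi_{x_j}}\circ T_j$ over its root edges, prove additivity directly from the definition (\ref{eq:def_tau}), prove multiplicativity by induction on $\mathrm{ht}(T_1)$ using right distributivity (Remark \ref{rmk:distributivity} and property \ref{prop:distr}), and obtain $ev(T)=ev(\tau(T))$ by induction starting from the atomic case $\sum_{(x,r)\in C}\phi_r=\phi_x$. Your explicit attention to the well-definedness of $\tau$, the trivial and $0$-labelled trees, and the two structural facts about the root-edge decomposition is a welcome refinement, but it does not change the substance of the argument.
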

\begin{proof}
    Let $T_1, T_2$ be two nontrivial trees in $\mathfrak{T}$.
    They can be written as $T_1 = \sum_{j=1}^m T_{\phi_{x_j}} \circ T_j$ and $T_2 = \sum_{j=m+1}^l T_{\phi_{x_j}} \circ T_j$ with atomic trees $T_{\phi_{x_j}}$.
    Their sum is given by $T_1 + T_2 = \sum_{j=1}^l T_{\phi_{x_j}} \circ T_j$ and the compatibility of $\tau$ with respect to addition follows from the definition (\ref{eq:def_tau}) and from the associativity of addition $\tau(T_1 + T_2) = \sum_{j=1}^l \tau(T_{\phi_{x_j}}) \circ \tau(T_j) = \tau(T_1) + \tau(T_2)$.
    
    The compatibility of $\tau$ with respect to multiplication is shown inductively on $\textrm{ht}(T_1)$.
    If $\textrm{ht}(T_1) = 1$, then $T_1 = \sum_{j=1}^m T_{\phi_{x_j}}$ and  $T_1 \circ T_2 = ( \sum_{j=1}^m T_{\phi_{x_j}} ) \circ T_2 = \sum_{j=1}^m ( T_{\phi_{x_j}} \circ T_2 )$ by Remark \ref{rmk:distributivity}.
    Thus, the property $\tau(T_1 \circ T_2) = \sum_{j=1}^m ( \tau(T_{\phi_{x_j}}) \circ \tau(T_2) ) = ( \sum_{j=1}^m \tau(T_{\phi_{x_j}})) \circ \tau(T_2) = \tau(T_1) \circ \tau(T_2)$
    follows from the definition (\ref{eq:def_tau}) and the right distributivity of the tree algebra $\mathfrak{T}^R$.
    Let $\textrm{ht}(T_1) = n$ and let $T_1$ be given by expression $T_1 = \sum_{j=1}^m T_{\phi_{x_j}} \circ T_j$ as above with $\textrm{ht}(T_j) \leq (n-1)$.
    Then $T_1 \circ T_2 =  ( \sum_{j=1}^m T_{\phi_{x_j}} \circ T_j ) \circ T_2 = \sum_{j=1}^m ( T_{\phi_{x_j}} \circ T_j  \circ T_2 )$ yields
    \begin{align*}
        \tau(T_1 \circ T_2) &= \sum_{j=1}^m \left[ \tau(T_{\phi_{x_j}}) \circ \tau(T_j  \circ T_2) \right] \\
        &= \sum_{j=1}^m \left[ \tau(T_{\phi_{x_j}}) \circ \tau(T_j)  \circ \tau(T_2) \right] \\
        &= \sum_{j=1}^m \left[ \tau(T_{\phi_{x_j}}) \circ \tau(T_j) \right] \circ \tau(T_2)  = \tau(T_1) \circ \tau(T_2),
    \end{align*}
    where the first line follows from the definition (\ref{eq:def_tau}), the second line from the induction hypothesis, and the third line from the right distributivity of the tree algebra.
    
    The invariance of the root label holds for an atomic tree as $ev(T_{\phi_x}) = \phi_x$ and $ev(\tau(T_{\phi_x})) = ev( \sum_{(x,r) \in C} T_{\phi_r}) = \sum_{(x,r) \in C} \phi_r = \phi_x$.
    It extends to the trees of height 1 by the associativity of the addition of trees and elements in $\mathcal{T}(\mathfrak{X})$.
    For a tree of arbitrary height this is verified inductively.
    Let $T = \sum_{j=1}^m T_{\phi_{x_j}} \circ T_j$.
    Its root label is determined by (\ref{eq:omega}) as $ev(T) = \sum_{j=1}^m \phi_{x_j} \circ ev(T_j)$.
    The root label of $\tau(T)$ is given by 
    \begin{equation*}
        ev(\tau(T)) = ev\left(\sum_{j=1}^m \tau(T_{\phi_{x_j}}) \circ \tau(T_j)\right) = \sum_{j=1}^m \phi_{x_j} \circ ev(\tau(T_j)),
    \end{equation*}
    which agrees with $ev(T)$ by induction hypothesis.
\end{proof}

These constructions yield the following central theorem.
\begin{thm} \label{thm:function}
With the maps $ev$ and $\tau$ defined above and the inclusion ${\iota:\mathcal{S} \cup \{ \textrm{id}\mid_{\mathfrak{X}}\}} \rightarrow  \mathcal{S}^R \cup \{ \textrm{id}\mid_{\mathfrak{X}} \}$, the following diagram commutes
\[\begin{tikzcd}
 \mathfrak{T} \arrow[r,twoheadrightarrow, "ev"]\arrow[d,hookrightarrow,"\tau"] & \mathcal{S} \cup \{ \textrm{id}\mid_{\mathfrak{X}} \} \arrow[d,hookrightarrow,"\iota"]\\
 \mathfrak{T}^R \arrow[r,twoheadrightarrow,"ev"]& \mathcal{S}^R \cup \{ \textrm{id}\mid_{\mathfrak{X}} \}
\end{tikzcd}\]
\noindent and all maps are homomorphisms.
Moreover, the evaluation maps $ev$ are surjective.
\end{thm}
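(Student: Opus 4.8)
The plan is to assemble the statement from Lemmas~\ref{lem:tree_homo} and~\ref{lem:tau_homo}, verifying separately that each of the four arrows has the asserted type and that the square commutes.

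First I would dispose of the two copies of $ev$. Applying Lemma~\ref{lem:tree_homo} with $A=\{\phi_x\}_{x\in X}\cup\{0\}$ and with $A=\{\phi_r\}_{r\in R}\cup\{0\}$ shows at once that $ev$ is a homomorphism for $+$ and $\circ$ on $\mathfrak{T}$ and on $\mathfrak{T}^R$, so it only remains to pin down the images. A short induction on the height of a tree shows that the root label of any tree of height at least one is built from the generators ($\phi_x$, resp.\ $\phi_r$) and $0$ by repeated applications of $\circ$ and $+$ --- one uses $\psi\circ\textrm{id}\mid_{\mathfrak{X}}=\psi$ to absorb the leaf labels --- hence lies in $\mathcal{S}$ (resp.\ $\mathcal{S}^R$), while the trivial one-vertex tree evaluates to $\textrm{id}\mid_{\mathfrak{X}}$; thus $ev(\mathfrak{T})\subseteq\mathcal{S}\cup\{\textrm{id}\mid_{\mathfrak{X}}\}$ and similarly for $\mathfrak{T}^R$. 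For surjectivity I would go the other way: an arbitrary element of $\mathcal{S}$ is, by Definition~\ref{def:semigroupModel}, a word in the $\phi_x$ and $0$ formed with $+$ and $\circ$; replacing each generator by its atomic tree and each operation by the corresponding tree operation yields, since $ev$ is a homomorphism with $ev(T_{\phi_x})=\phi_x$ and $ev(T_0)=0$, a tree whose root label is the chosen element. Together with $\textrm{id}\mid_{\mathfrak{X}}=ev(\textrm{trivial tree})$ this gives surjectivity of both evaluation maps, the argument for $\mathcal{S}^R$ being identical.

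The map $\tau$ is a homomorphism by Lemma~\ref{lem:tau_homo}, and $\iota$ is a homomorphism because $\mathcal{S}$ sits inside $\mathcal{S}^R$ as a subsemigroup model, so the inclusion respects $\circ$ and $+$ and fixes $\textrm{id}\mid_{\mathfrak{X}}$; injectivity of $\iota$ is immediate. Commutativity of the square, $\iota\circ ev=ev\circ\tau$, is exactly the second assertion of Lemma~\ref{lem:tau_homo}, namely $ev(T)=ev(\tau(T))$ for every $T\in\mathfrak{T}$, once one observes that $\iota$ is the identity on the underlying maps in $\mathcal{T}(\mathfrak{X})$.

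The one point that requires genuine care is the injectivity of $\tau$. Here I would construct an inverse on the image: given $S\in\mathfrak{T}^R$ of the form $S=\tau(T)$, reconstruct $T$ recursively from the root, recognising at each vertex which bundle of outgoing $\phi_r$-edges --- together with the subtrees already reconstructed below them --- comes from a single $\phi_x$-edge via the fixed expansion $\phi_x=\sum_{(x,r)\in C}\phi_r$. I expect this matching of edge-bundles to chemicals, level by level, and the verification that it is forced, to be the main obstacle in the proof; everything else reduces to Lemmas~\ref{lem:tree_homo} and~\ref{lem:tau_homo} and to formal properties of the two inclusions.
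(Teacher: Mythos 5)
For the parts the theorem's text actually asserts --- that all maps are homomorphisms, that the square commutes, and that the two evaluation maps are surjective --- your proposal follows the same route as the paper: everything is reduced to Lemmas \ref{lem:tree_homo} and \ref{lem:tau_homo}, commutativity is exactly the identity $ev(T)=ev(\tau(T))$ from Lemma \ref{lem:tau_homo} once one notes that $\iota$ is the identity on the underlying maps in $\mathcal{T}(\mathfrak{X})$, and surjectivity comes from the atomic trees $T_{\phi_x}$, $T_{\phi_r}$, $T_0$ together with closure of the tree algebras under $+$ and $\circ$. Your additional height induction checking that root labels land in $\mathcal{S}\cup\{\mathrm{id}\mid_{\mathfrak{X}}\}$ (resp.\ $\mathcal{S}^R\cup\{\mathrm{id}\mid_{\mathfrak{X}}\}$) is a harmless extra verification; so far the argument is correct and essentially the paper's.

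The genuine problem is your last paragraph. The reconstruction of $T$ from $\tau(T)$ is not forced: the partition of a bundle of $\phi_r$-labelled edges into groups arising from single $\phi_x$-edges is in general not unique, and in fact $\tau$ need not be injective at all. For instance, if a chemical $x$ catalyzes exactly the reactions $r_1,r_2$ while chemicals $y$ and $z$ catalyze exactly $r_1$ and exactly $r_2$, respectively, then $T_{\phi_x}$ (one edge) and $T_{\phi_y}+T_{\phi_z}$ (two edges) are distinct elements of $\mathfrak{T}$ with $\tau(T_{\phi_x})=T_{\phi_{r_1}}+T_{\phi_{r_2}}=\tau(T_{\phi_y}+T_{\phi_z})$, so the inverse you plan to build does not exist. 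Note that the theorem's wording claims only commutativity, the homomorphism property and surjectivity of $ev$, and the paper's own proof never addresses injectivity of $\tau$ (the hooked arrow in the diagram notwithstanding); the correct move is therefore to drop that claim --- or flag the arrow as unjustified --- rather than to attempt the level-by-level matching, which would be the one step of your plan that fails.
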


\begin{proof}
 The homomorphism property follows from the Lemmata \ref{lem:tree_homo} and \ref{lem:tau_homo}.
 The commutativity of the diagram has also been proven in Lemma \ref{lem:tau_homo}.
 The evaluation maps are surjective because the generators $\{\phi_x\}_{x \in X} \cup \{0\}\subset \mathcal{S}$ and ${\{\phi_r\}_{x \in R} \cup \{0\} \subset \mathcal{S}^R}$ have preimages given by the atomic trees $\{T_{\phi_x}\}_{x \in X} \cup \{T_0\} \subset \mathfrak{T}$ and ${\{T_{\phi_r}\}_{r \in R} \cup \{T_0\} \subset \mathfrak{T}^R}$ combined with the fact that the tree algebras are closed under the operations of addition and multiplication.
\end{proof}

\noindent The finiteness of $\mathcal{S}$ yields the following corollary.
\begin{corr}
There is an $N$ such that the set of trees of height at most $\mathfrak{T}^{\leq N}$ maps surjectively onto $\mathcal{S} \cup \{ \textrm{id}\mid_{\mathfrak{X}} \}$.
\end{corr}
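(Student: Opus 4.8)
The plan is to exploit two facts already in hand: first, that $\mathcal{S}$ is finite, being a subsemigroup of the finite full transformation semigroup $\mathcal{T}(\mathfrak{X})$ (as observed right after Definition \ref{def:semigroupModel}), so that $\mathcal{S} \cup \{ \textrm{id}\mid_{\mathfrak{X}} \}$ is a finite set; and second, that the evaluation homomorphism $ev \colon \mathfrak{T} \to \mathcal{S} \cup \{ \textrm{id}\mid_{\mathfrak{X}} \}$ is surjective by Theorem \ref{thm:function}. The statement is then essentially a pigeonhole argument combining these two.

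Concretely, I would first invoke surjectivity of $ev$ to pick, for each element $s \in \mathcal{S} \cup \{ \textrm{id}\mid_{\mathfrak{X}} \}$, some tree $T_s \in \mathfrak{T}$ with $ev(T_s) = s$. Since the index set $\mathcal{S} \cup \{ \textrm{id}\mid_{\mathfrak{X}} \}$ is finite, the quantity
\[
N := \max\{\, \mathrm{ht}(T_s) \; : \; s \in \mathcal{S} \cup \{ \textrm{id}\mid_{\mathfrak{X}} \} \,\}
\]
is a well-defined non-negative integer. Next I would observe that, by the choice of $N$, each $T_s$ lies in $\mathfrak{T}^{\leq N}$, so the restricted map $ev\mid_{\mathfrak{T}^{\leq N}}$ already attains every element of $\mathcal{S} \cup \{ \textrm{id}\mid_{\mathfrak{X}} \}$; that is, $ev\bigl(\mathfrak{T}^{\leq N}\bigr) = \mathcal{S} \cup \{ \textrm{id}\mid_{\mathfrak{X}} \}$, which is the assertion.

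There is essentially no obstacle here: the only point worth flagging is that the bound $N$ produced this way is non-constructive, since it is read off from an a priori unknown family of representing trees. If one wanted an explicit value of $N$, one would have to bound the tree height needed to realize each semigroup element — for instance by a careful induction on the number of distinct elements produced, tracking how the tree operations $\circ$ and $+$ grow $\mathrm{ht}(\cdot)$ against how many new elements of $\mathcal{S}$ they can yield — but for the mere existence statement claimed in the corollary, finiteness of $\mathcal{S}$ together with surjectivity of $ev$ suffices and no such estimate is required.
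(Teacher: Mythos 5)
Your argument is correct and is exactly the reasoning the paper intends: the corollary is stated as an immediate consequence of the finiteness of $\mathcal{S}$ together with the surjectivity of $ev$ from Theorem \ref{thm:function}, with $N$ taken as the maximum height of a chosen family of tree representatives. No difference in approach, and your remark that $N$ is non-constructive is a fair but inessential addition.
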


\noindent Moreover, Theorem \ref{thm:function} implies that the elements of $\mathcal{S}$ and $\mathcal{S}^R$ can be represented as decorated rooted trees by lifting the respective semigroup elements via the homomorphism $ev$.
\begin{defn} \label{def:minimalRepresentative}
A {\it tree representative} of an element $\phi \in \mathcal{S}$ is an element $T \in \mathfrak{T}$ such that $ev(T) = \phi$.
The representative is called {\it minimal} if it has no subtree $T'$ such that $ev(T') = \phi$.
The analogous definition holds for tree representatives of elements of $\mathcal{S}^R$ in $\mathfrak{T}^R$.
\end{defn}

\begin{rmk}[Biochemical interpretation of a tree] \label{rmk:StoSF2}
A tree $T \in \mathfrak{T}^R$ of level $n$ corresponds to a ''reaction mechanism'' of the network which can be described as follows:
The reactions at the terminal edges are carried out and their products are supplied to their tail vertices.
For each vertex, once it has received the products from all its outgoing edges, these products act as reactants for the reaction on its incoming edge.
This procedure is carried out iteratively for the levels of the tree and therefore takes $n-1$ steps for a tree of height $n$.
For a tree $T \in \mathfrak{T}$, the respective reaction mechanism is the reaction mechanism described $\tau(T) \in \mathfrak{T}^R$.
\end{rmk}

Finally, the decorated rooted trees in $\mathfrak{T}^R$ can be used to reformulate the F property given in Definition \ref{def:Fnetwork}.
In particular, the condition \ref{cond:F} can be encoded in a tree:
\begin{defn} \label{def:Ftree}
    Let $x \in X_F$ be a chemical for which the condition \ref{cond:F} holds.
    Let $R_1,\dotsc,R_n$ be the sets of reactions featured in \ref{cond:F} and denote by $T_{\phi_r}$ the atomic trees for the functions $\phi_r$.
    Define the trees $T_1^R,\dotsc,T_n^R \in \mathfrak{T}^R$ inductively as follows:
    Let
    \begin{equation} \label{eq:T1r}
        T_1^R := \sum_{r \in R_1} T_{\phi_r},
    \end{equation}
    and for $1<i<n$:
    \begin{equation} \label{eq:Tir}
        T_{i+1}^R := \sum_{r \in \bigcup_{j=1}^{i+1} R_j} T_{\phi_r} \circ T_i^R.
    \end{equation}
    The tree $T_n^R$ is said to be the {\it F-tree} for the element $x$.
    It is denoted by $T^R(x)$.
\end{defn}

\begin{lem} \label{lem:treesFproperty}
Let $x \in X_F$ be a chemical for which the condition \ref{cond:F} holds.
Then for the F-tree $T^R(x)$, the relation
\begin{equation*}
    x \in ev\left(T^R(x)\right)(\emptyset)
\end{equation*}
holds.
In other words, $T^R(x)$ represents a reaction mechanism that produces $x$ from the food set.
\end{lem}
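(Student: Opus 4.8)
The plan is to transport the recursive construction of the $F$-tree into the extended semigroup model $\mathcal{S}^R$ via the evaluation map $ev$, and then to compute the value at $\emptyset$ by an induction that mirrors the conditions \ref{cond:F1}--\ref{cond:F3}. First I would set $\Psi_i := ev(T_i^R)$ for $i=1,\dots,n$, where $T_1^R,\dots,T_n^R=T^R(x)$ are the trees of Definition \ref{def:Ftree}. Since $ev$ is a homomorphism for $+$ and $\circ$ (Lemma \ref{lem:tree_homo}) and $ev(T_{\phi_r})=\phi_r$ for every atomic tree --- equivalently, by unwinding the recursion (\ref{eq:omega}) for the root labels --- applying $ev$ to the recursion of Definition \ref{def:Ftree} yields
\begin{equation*}
    \Psi_1 = \sum_{r \in R_1}\phi_r, \qquad \Psi_{i+1} = \sum_{r\in\bigcup_{j=1}^{i+1}R_j}\phi_r\circ\Psi_i .
\end{equation*}
It then suffices to show $x\in\Psi_n(\emptyset)$.

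The heart of the proof is the claim, established by induction on $i$, that $\bigcup_{j=1}^{i}\textrm{ran}(R_j)\cap X_F \subset \Psi_i(\emptyset)$ for $1\le i\le n$. For $i=1$, condition \ref{cond:F1} gives $\textrm{dom}(r)\subset F$ for every $r\in R_1$, so $\phi_r(\emptyset)=\textrm{ran}(r)\cap X_F$ by the definition (\ref{eq:phi_r}), and the definition (\ref{eq:addition}) of $+$ then gives $\Psi_1(\emptyset)=\textrm{ran}(R_1)\cap X_F$. For the inductive step, I would first check that for every $r\in\bigcup_{j=1}^{i+1}R_j$, conditions \ref{cond:F1} and \ref{cond:F2} together force $\textrm{dom}(r)\subset\bigcup_{k=1}^{i}\textrm{ran}(R_k)\cup F$; intersecting with $X_F$ and invoking the induction hypothesis gives $\textrm{dom}(r)\cap X_F\subset\Psi_i(\emptyset)$, hence $\textrm{dom}(r)\subset\Psi_i(\emptyset)\cup F$. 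By (\ref{eq:phi_r}) this means $\phi_r(\Psi_i(\emptyset))=\textrm{ran}(r)\cap X_F$, and unioning over all such $r$ in $\Psi_{i+1}(\emptyset)=\bigcup_{r}\phi_r(\Psi_i(\emptyset))$ proves the claim for $i+1$. Finally, condition \ref{cond:F3} gives $x\in\textrm{ran}(R_n)$, and since $x\in X_F$ we obtain $x\in\textrm{ran}(R_n)\cap X_F\subset\Psi_n(\emptyset)=ev(T^R(x))(\emptyset)$, which is the assertion; the final sentence of the lemma is then just the biochemical reading of this computation.

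Since the argument is essentially bookkeeping there is no deep obstacle, but two points need care. The first is to run the whole induction with subsets of $X_F$ rather than of $X$: the map $\phi_r$ records only the non-food products $\textrm{ran}(r)\cap X_F$, whereas its firing condition $\textrm{dom}(r)\subset Y\cup F$ allows food reactants freely, so the set identities above must be phrased in terms of $\textrm{ran}(R_j)\cap X_F$ throughout. The second is the interface with the tree language: one should note that the recursion of Definition \ref{def:Ftree} builds $T_2^R,\dots,T_n^R$ successively out of $T_1^R$ (with the case $n=1$ immediate from condition \ref{cond:F3}), and, if one wants the homomorphism property of $ev$ to apply verbatim, that the sets $R_1,\dots,R_n$ in condition \ref{cond:F} may be taken nonempty so that none of the $T_i^R$ is the trivial one-vertex tree.
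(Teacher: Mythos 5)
Your proposal is correct and follows essentially the same route as the paper's proof: the same induction over $i$ establishing that all products of $R_1,\dots,R_i$ are captured by $ev\left(T_i^R\right)(\emptyset)$, with conditions \ref{cond:F1} and \ref{cond:F2} guaranteeing that every $\phi_r$ in the recursion fires, and \ref{cond:F3} finishing the argument. The only differences are cosmetic: you phrase the inductive inclusion as $\bigcup_{j\leq i}\textrm{ran}(R_j)\cap X_F \subset ev\left(T_i^R\right)(\emptyset)$ whereas the paper writes the equivalent $\bigcup_{j\leq i}\textrm{ran}(R_j)\subset ev\left(T_i^R\right)(\emptyset)\cup F$, and you invoke the homomorphism property of $ev$ where the paper reads the root label directly off the recursion (\ref{eq:omega}).
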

\begin{proof}
Let $T_i^R$ be the trees from Definition \ref{def:Ftree} with $T_n^R =T^R(x)$.
It will be shown inductively that 
\begin{align} \label{eq:induction1}
    \bigcup_{j=1}^{i} \textrm{ran}(R_j) \subset ev\left(T_i^R\right)(\emptyset) \cup F
\end{align}
holds for all $i=1,\dotsc,n$ and therefore
\begin{align} \label{eq:induction2}
\bigcup_{j=1}^{i+1} \textrm{dom}(R_j) \subset ev\left(T_i^R\right)(\emptyset) \cup F    
\end{align}
holds for all $i=1,\dotsc,n-1$.
The inclusion (\ref{eq:induction2}) follows from (\ref{eq:induction1}) together with the conditions \ref{cond:F1} and \ref{cond:F2}.
Then, the claim $x \in ev\left(T_n^R\right)(\emptyset)$ will follow from the inclusion (\ref{eq:induction1}) together with the condition \ref{cond:F3}.

For $i=1$, the definition of $T_1^R$ gives
\begin{equation*}
    ev\left(T_1^R\right)(\emptyset) \cup F = \sum_{r \in R_1} \phi_r (\emptyset) \cup F.
\end{equation*}
From condition \ref{cond:F1}, i.e. $\textrm{dom}(R_1) \subset F$, it follows that $\textrm{ran}(R_1) = \sum_{r \in R_1} \phi_r(\emptyset) \subset ev\left(T_1^R\right)(\emptyset) \cup F$.
And from condition \ref{cond:F2}, i.e. $\textrm{dom}(R_2) \subset \textrm{ran}(R_1) \cup F$, together with \ref{cond:F1}, it follows that $\textrm{dom}(R_1) \cup \textrm{dom}(R_2) \subset ev\left(T_1^R\right)(\emptyset) \cup F$.

For $i+1$, one obtains
\begin{align*}
    ev\left(T_{i+1}^R\right)(\emptyset) \cup F &=  \sum_{r \in \bigcup_{j=1}^{i+1} R_j} \phi_r\left(ev\left(T_i^R\right)(\emptyset)\right) \cup F \\
        &\supset \bigcup_{j=1}^{i+1} \textrm{ran}(R_j), 
\end{align*}
where the final inclusion is obtained from the induction hypothesis $\bigcup_{j=1}^{i+1} \textrm{dom}(R_j) \subset ev\left(T_i^R\right)(\emptyset) \cup F$.
The conditions \ref{cond:F1} and \ref{cond:F2} imply now that $\bigcup_{j=1}^{i+2} \textrm{dom}(R_j) \subset ev\left(T^R_{i+1}\right)(\emptyset) \cup F $ for $i \leq n-2$.
\end{proof}

\section{Characterization of self-sustaining and self-generating CRS} \label{sec:RAF}

In this section, the representation of semigroup elements as trees is used to derive a succinct expression for the maximal RAF set of chemicals of a CRS as the fixed point of the generative dynamics $Y \mapsto \Phi_Y(\emptyset)$ with the initial condition $Y_0 = X_F$ (Theorem \ref{thm:maxRAF}).
In Section \ref{sec:prelim2}, it is shown that a CRS if RAF if and only if $\Phi_{X_F}(\emptyset) = X_F$ holds (Theorem \ref{thm:RAF}) and that the condition $\Phi_{X_F'}(\emptyset) = X_F'$ is sufficient for a set of chemicals $X_F' \subset X_F$ to be a RAF set of chemicals (Proposition \ref{prop:RAFsufficient}).
The latter statement is the key statement to prove that the fixed point of the dynamics, which is introduced in Section \ref{sec:dynamics}, satisfies the desired properties.
The importance of fixed points of the dynamics as functionally closed and therefore biologically relevant entities is discussed in Section \ref{sec:funcClosed}.

This whole section follows a logical structure which is analogous the structure of Section 4 in \cite{Loutchko2019}, where the analogous statements are proven for self-sustaining CRS.
However, the treatment of self-generating CRS is technically more involved, which is forced by the fact that the F property is more involved than the self-sustainment property of a CRS, cf. Remark \ref{rmk:selfSustaining}.\\
 
\noindent Throughout this section, fix a CRS  $(X,R,C,F)$ and let $\mathcal{S}$ be its semigroup model.

\subsection{Characterization of CRS with the RAF property} \label{sec:prelim2}

A CRS with the RAF property can be conveniently characterized via the set of chemicals generated by the maximal function of its semigroup model from the food set.

\begin{thm} \label{thm:RAF}
A CRS is RAF if and only if the maximal function $\Phi_{X_F}$  of its semigroup model satisfies
\begin{equation} \label{eq:XF}
\Phi_{X_F}(\emptyset) = X_F.
\end{equation}
\end{thm}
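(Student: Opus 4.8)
The inclusion $\Phi_{X_F}(\emptyset)\subseteq X_F$ is automatic, as $\Phi_{X_F}$ is a self-map of $\mathfrak{X}=\mathcal{P}(X_F)$; and by Definition~\ref{def:function} (with $X_F'=X_F$, so $\mathcal{S}(X_F)=\mathcal{S}$) together with the definition of $+$ one has $\Phi_{X_F}(\emptyset)=\bigcup_{\phi\in\mathcal{S}}\phi(\emptyset)$. Hence the theorem is equivalent to: the CRS is RAF if and only if for every $x\in X_F$ there exists $\phi\in\mathcal{S}$ with $x\in\phi(\emptyset)$. I would prove the two implications separately, in both cases passing through the tree calculus of Section~\ref{sec:prelim}.

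For ``RAF $\Rightarrow$ $\Phi_{X_F}(\emptyset)=X_F$'', fix $x\in X_F$ and, using the RAF hypothesis, pick sets $R_1,\dots,R_n\subseteq\pi_R(C)$ witnessing condition~\ref{cond:F} for $x$ (we may take all $R_i$ nonempty) together with a catalyst $c(r)\in X$ of each reaction $r$ occurring in them. Let $T^R(x)\in\mathfrak{T}^R$ be the F-tree of Definition~\ref{def:Ftree}, so $x\in ev(T^R(x))(\emptyset)$ by Lemma~\ref{lem:treesFproperty}. Now run the recursion of Definition~\ref{def:Ftree} with $\phi_{c(r)}$ in place of $\phi_r$, producing trees $\widehat{T}_1,\dots,\widehat{T}_n\in\mathfrak{T}$ and set $\widehat{T}:=\widehat{T}_n$. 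Since $\phi_r$ is a summand of $\phi_{c(r)}=\sum_{(c(r),r')\in C}\phi_{r'}$, property (S3) of Lemma~\ref{lem:elementaryProperties} gives $\phi_r\le\phi_{c(r)}$; a short induction on $i$ using the compatibility of $\le$ with $+$ and $\circ$ (property (S2)) then yields $ev(T_i^R)\le ev(\widehat{T}_i)$ for all $i$, hence $x\in ev(T^R(x))(\emptyset)\subseteq ev(\widehat{T})(\emptyset)$. As $\widehat{T}$ is a nontrivial tree in $\mathfrak{T}$, its root label lies in $\mathcal{S}$, so $ev(\widehat{T})\le\Phi_{X_F}$ by property (S7), whence $x\in\Phi_{X_F}(\emptyset)$. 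Since $x$ was arbitrary, $X_F\subseteq\Phi_{X_F}(\emptyset)$, and equality follows.

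For the converse, suppose $\Phi_{X_F}(\emptyset)=X_F$ and fix $x\in X_F$. By the reformulation above there is $\phi\in\mathcal{S}$ with $x\in\phi(\emptyset)$; by surjectivity of $ev$ (Theorem~\ref{thm:function}) choose $T\in\mathfrak{T}$ with $ev(T)=\phi$, and put $T':=\tau(T)\in\mathfrak{T}^R$, so $ev(T')=ev(T)=\phi$ by Lemma~\ref{lem:tau_homo}. From the construction of $\tau$, every edge label of $T'$ is either the zero map or the function $\phi_r$ of a \emph{catalyzed} reaction $r\in\pi_R(C)$. The core of this direction is the claim, proved by induction on $\mathrm{ht}(T')$: for every $T'\in\mathfrak{T}^R$ whose edge labels are zero maps or functions of catalyzed reactions, there are $m\ge 1$ and $R_1,\dots,R_m\subseteq\pi_R(C)$ satisfying \ref{cond:F1} and \ref{cond:F2} with $ev(T')(\emptyset)\subseteq\bigcup_{j=1}^m\textrm{ran}(R_j)\cup F$. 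In the inductive step one writes $ev(T')=\sum_{w\in\textrm{ch}(t')}\omega_E((t',w))\circ ev(T'_w)$, applies the hypothesis to each subtree $T'_w$, pads the resulting sequences to a common length $N$ by repeating their last terms, takes componentwise unions to get $R_1,\dots,R_N$, and appends $R_{N+1}$ equal to the set of those top-edge reactions whose domains are contained in $\bigcup_w ev(T'_w)(\emptyset)\cup F\subseteq\bigcup_{j\le N}\textrm{ran}(R_j)\cup F$; one checks \ref{cond:F1}, \ref{cond:F2} persist and $ev(T')(\emptyset)\subseteq\textrm{ran}(R_{N+1})\cup F$. Applying the claim to our $T'$ and truncating at the least $k$ with $x\in\textrm{ran}(R_k)$ (such $k$ exists since $ev(T')(\emptyset)\subseteq X_F$ forces $x\in\bigcup_j\textrm{ran}(R_j)$) gives $R_1,\dots,R_k\subseteq\pi_R(C)$ witnessing condition~\ref{cond:F} for $x$ by catalyzed reactions. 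As $x$ was arbitrary, the CRS is RAF.

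The routine but delicate part is the induction in the converse: the bookkeeping of padding and componentwise union of the reaction sequences and the check that \ref{cond:F1}, \ref{cond:F2} survive, plus the handling of degenerate cases (empty $R_i$, zero-labelled edges, roots whose children are all leaves). The conceptual point that makes everything fit is that $\tau$ is exactly the operation turning a tree over chemical functions into a tree over catalyzed-reaction functions, so the associated ``reaction mechanism'' of Remark~\ref{rmk:StoSF2} uses only catalyzed reactions; Lemma~\ref{lem:treesFproperty} and the inductive claim above are then the two halves of the dictionary between such mechanisms and condition~\ref{cond:F}.
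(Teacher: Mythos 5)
Your argument is correct. The forward direction is essentially the paper's: build the F-tree $T^R(x)$, replace each $\phi_r$ by the function $\phi_{y(r)}$ of a chosen catalyst to get a tree in $\mathfrak{T}$ whose evaluation dominates $ev(T^R(x))$ via \ref{eq:order1} and \ref{lemma:PO}, and bound by $\Phi_{X_F}$ via \ref{prop:function}. The converse, however, takes a genuinely different route. The paper also starts from a tree representative $T$ of $\Phi_{X_F}$ and its image $\tau(T)\in\mathfrak{T}^R$, but then extracts a subtree $T^{min}(x)$ that is \emph{minimal} subject to $x\in ev(T^{min}(x))(\emptyset)$, defines $R_i$ as the reactions labelling the edges of level $n+1-i$, and argues that minimality forces \ref{cond:F1} and \ref{cond:F2} (any offending edge could be pruned without losing $x$). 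You instead prove, by structural induction on tree height, that \emph{every} tree in $\mathfrak{T}^R$ with catalyzed (or zero) edge labels yields a sequence $R_1,\dots,R_m\subseteq\pi_R(C)$ satisfying \ref{cond:F1}--\ref{cond:F2} with $ev(T')(\emptyset)\subseteq\bigcup_j\textrm{ran}(R_j)\cup F$, assembling the sequence bottom-up by padding and componentwise unions and then truncating at the first level containing $x$. Your inductive claim is in effect a converse to Lemma \ref{lem:treesFproperty}, and it buys you a self-contained argument that avoids the minimality/pruning step (whose justification in the paper is rather terse), at the cost of the explicit bookkeeping you acknowledge; the paper's minimal-subtree device is shorter here and, moreover, the object $T^{min}(x)$ and its level sets are reused almost verbatim in the proof of Proposition \ref{prop:RAFsufficient}, which your route would have to rederive in that later argument.
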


\begin{proof}
If the CRS is RAF, then by Lemma \ref{lem:treesFproperty}, the function $ev(T^R(x))$ satisfies $x \in ev(T^R(x))(\emptyset)$ for all $x \in X_F$.
This function is an element of $\mathcal{S}^R$ but not of $\mathcal{S}$ in general.
The RAF property allows to construct a tree $T(x) \in \mathfrak{T}$ such that $ev(T^R(x)) \leq ev(T(x))$ and thus $x \in ev(T(x))(\emptyset)$:
Choose a catalyst $y(r) \in X$ for each reaction $r \in R_i$ for all $R_i$ featured in the condition \ref{cond:F} for $x \in X_F$.
In analogy to the formulae (\ref{eq:T1r}) and (\ref{eq:Tir}), define
\begin{equation}
    T_1 := \sum_{r \in R_1} T_{\phi_{y(r)}},
\end{equation}
and for $1<i<n$:
\begin{equation}
    T_{i+1} := \sum_{r \in \bigcup_{j=1}^{i+1} R_j} T_{\phi_{y(r)}} \circ T_i
\end{equation}
with the atomic trees $T_{\phi_{y(r)}} \in \mathfrak{T}$ and set $T(x) := T_n$.
The properties \ref{eq:generators}, \ref{lemma:PO} and \ref{eq:order1} ensure that $ev(T^R(x)) \leq ev(T(x))$.
The function
\begin{align*}
    \Phi := \sum_{x \in X_F} ev(T(x))
\end{align*}
satisfies $X_F \subset \Phi(\emptyset)$ and thus the equality $\Phi(\emptyset) = X_F$ holds.
Therefore, $\Phi$ is the maximal function $\Phi_{X_F}$ of $\mathcal{S}$ and the claim $\Phi_{X_F}(\emptyset) = X_F$ holds.

To prove the reverse, assume that $\Phi_{X_F}(\emptyset) = X_F$ holds.
Choose a representative $T \in \mathfrak{T}$ for $\Phi_{X_F}$, i.e. a tree $T$ such that $ev(T)= \Phi_{X_F}$, and consider its image $\tau(T) \in \mathfrak{T}^R$.
Fix a chemical $x \in X_F$.
By Theorem \ref{thm:function}, the relation
\begin{align*}
    x \in ev(\tau(T))(\emptyset)
\end{align*}
holds.
Choose a subtree $T^{min}(x) \in \mathfrak{T}^R$ of $\tau(T)$ which is minimal under the condition
\begin{align} \label{eq:xInEv}
x \in ev(T^{min}(x))(\emptyset).
\end{align}
The existence of $T^{min}(x)$ follows from the existence of $\tau(T)$.
The sets $R_1,\dotsc,R_n$ featured in the condition \ref{cond:F} are constructed as follows:
Let the height of $T^{min}(x)$ be $n$ and define the set $R_i$ to contain the reaction corresponding to the labels of all edges whose heads have level $n+1-i$ for $1 \leq i \leq n$, i.e.
\begin{equation} \label{eq:RiFromTree}
    R_i := \{r \in R \textrm{ such that } \phi_r = \omega_E(e) \textrm{ for some } e \in \textrm{elv}_{n+1 -i}(T^{min}(x)) \},
\end{equation}
where $\omega_E$ is the decoration function for the edges of $T^{min}(x)$.
By the minimality of $T^{min}(x)$, the conditions \ref{cond:F1} and \ref{cond:F2} must be satisfied (reactions in any of the $R_i$ which do not satisfy the conditions could be omitted from the tree without violating the condition (\ref{eq:xInEv}) thus contradicting the minimality of $T^{min}(x)$).
The condition \ref{cond:F3} holds by construction of $T^{min}(x)$.
Finally, all reactions appearing as edge labels of $T^{min}(x)$, and thereby all reactions in the sets  $R_1,\dotsc,R_n$, are catalyzed because this holds for $\tau(T)$ by construction.
This concludes the proof.
\end{proof}

\begin{corr} \label{corr:maxRAF}
If $X_F' \subset X_F$ is a RAF set of chemicals, then the inclusion
\begin{equation*}
    X_F' \subset \Phi_{X_F'}(\emptyset)
\end{equation*}
holds.
\end{corr}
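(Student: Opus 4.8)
The plan is to reduce the claim to Theorem~\ref{thm:RAF} applied to the subCRS generated by $X_F'$, and then transport the resulting equality back into the ambient semigroup model $\mathcal{S}$ by means of the extension map from Remark~\ref{rmk:subsemigroup}.

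First I would observe that, by Definition~\ref{def:RAFsetofchemicals}, the hypothesis that $X_F'$ is a RAF set of chemicals means precisely that the subCRS $(X',R\mid_{X'},C\mid_{X'},F)$ is RAF. This is a CRS in the sense of Definition~\ref{def:CRS}: since $R\mid_{X'}\subset R$ and $C\mid_{X'}\subset C$, and since any reaction $r$ with $\mathrm{supp}(r)\subset X'$ has $\mathrm{dom}(r)\subset X'$, the food-set closure condition~\ref{cond:C} is inherited verbatim from the full CRS, because $\mathrm{dom}(r)\cap(X\setminus F)=\mathrm{dom}(r)\cap(X'\setminus F)$. Hence Theorem~\ref{thm:RAF} applies to the subCRS, whose non-food chemicals are $X_F'$ and whose semigroup model is $\mathcal{S}'(X',R\mid_{X'})$ with maximal element $\Phi'_{X_F'}$; it yields the equality $\Phi'_{X_F'}(\emptyset)=X_F'$.

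Next I would invoke Remark~\ref{rmk:subsemigroup}, which supplies the extension map $ext\colon \mathcal{S}'(X',R\mid_{X'})\to\mathcal{T}(\mathfrak{X})$, defined by $ext(\phi)(Y)=\phi(Y\cap X_F')$, together with the inequality~\eqref{eq:maxInequality}, namely $ext(\Phi'_{X_F'})\leq\Phi_{X_F'}$. Evaluating the left-hand side at $Y=\emptyset$ gives $ext(\Phi'_{X_F'})(\emptyset)=\Phi'_{X_F'}(\emptyset\cap X_F')=\Phi'_{X_F'}(\emptyset)=X_F'$ by the previous step. Unwinding the partial order (Remark~\ref{rmk:PO}), the inequality $ext(\Phi'_{X_F'})\leq\Phi_{X_F'}$ gives in particular $ext(\Phi'_{X_F'})(\emptyset)\subset\Phi_{X_F'}(\emptyset)$, and combining these two facts yields $X_F'\subset\Phi_{X_F'}(\emptyset)$, which is the desired inclusion.

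I do not anticipate a genuine obstacle here: the content of the corollary lies entirely in Theorem~\ref{thm:RAF}, and the remaining work is the routine bookkeeping of checking that the subCRS is admissible and that the extension map behaves correctly on the empty set. The only point deserving explicit mention is the verification that condition~\ref{cond:C} descends to the subCRS, which I have sketched above and which is immediate.
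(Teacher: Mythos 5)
Your proposal is correct and follows essentially the same route as the paper: apply Theorem~\ref{thm:RAF} to the subCRS generated by $X_F'$ to get $\Phi'_{X_F'}(\emptyset)=X_F'$, then transfer this via the extension map and inequality~(\ref{eq:maxInequality}) from Remark~\ref{rmk:subsemigroup}, evaluated at $\emptyset$. Your extra check that condition~\ref{cond:C} descends to the subCRS is a harmless addition the paper leaves implicit.
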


\begin{proof}
The maximal function $\Phi'_{X'_F}$ of the semigroup model $\mathcal{S}'(X',R\mid_{X'})$ satisfies $\Phi'_{X'_F}(\emptyset) = X'_F$ by Theorem \ref{thm:RAF}.
Its extension $ext(\Phi'_{X'_F})$, defined in Remark \ref{rmk:subsemigroup}, satisfies $ext(\Phi'_{X'_F})(\emptyset) = X'_F$ by definition.
The relation (\ref{eq:maxInequality}) gives
\begin{equation*}
    ext(\Phi'_{X'_F}) \leq \Phi_{X'_F}
\end{equation*}
and yields the claim when the functions above are applied to the empty set.
\end{proof}

For a RAF set of chemicals $X'_F$, the inclusion $X'_F \subset \Phi_{X'_F}(\emptyset)$ can be strict and therefore, in general, the equality $X'_F = \Phi_{X'_F}(\emptyset)$ is not satisfied.
However, it is a sufficient condition for $X'_F$ to be a RAF set of chemicals.

\begin{prop} \label{prop:RAFsufficient}
If the equality $X'_F = \Phi_{X'_F}(\emptyset)$ holds for a set of chemicals $X'_F \subset X_F$, then $X'_F$ is a RAF set of chemicals.
\end{prop}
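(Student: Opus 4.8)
The plan is to mimic the reverse implication in the proof of Theorem~\ref{thm:RAF}, but carried out for the subCRS $(X',R\mid_{X'},C\mid_{X'},F)$: for each $x\in X'_F$ I will produce sets of reactions $R_1,\dots,R_n\subseteq\pi_R(C\mid_{X'})$ witnessing condition~\ref{cond:F} for $x$. Since these reactions are catalyzed and supported on $X'$, this shows that the subCRS is RAF by Definition~\ref{def:RAF}, i.e.\ that $X'_F$ is a RAF set of chemicals. (If $X'_F=\emptyset$ there is nothing to prove.)

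I would first extract a structural consequence of the hypothesis. Since $\Phi_{X'_F}$ is the maximal element of $\mathcal{S}(X'_F)$ (property~\ref{prop:function}), one has $\Phi_{X'_F}\circ\Phi_{X'_F}\leq\Phi_{X'_F}$, so $\Phi_{X'_F}(X'_F)=\Phi_{X'_F}(\Phi_{X'_F}(\emptyset))\subseteq\Phi_{X'_F}(\emptyset)=X'_F$; together with monotonicity~\ref{eq:generators} this gives $\Phi_{X'_F}(X'_F)=X'_F$, hence $\phi(Y)\subseteq X'_F$ for every $\phi\in\mathcal{S}(X'_F)$ and every $Y\subseteq X'_F$. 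Applying this to a generator $\phi_y$ with $y\in X'$ and unwinding the defining formula~(\ref{eq:phi_x}) yields the \emph{key observation}: every reaction $r$ that has a catalyst in $X'$ and satisfies $\mathrm{dom}(r)\subseteq X'$ also satisfies $\mathrm{ran}(r)\cap X_F\subseteq X'_F$, hence $\mathrm{supp}(r)\subseteq X'$ and $r\in\pi_R(C\mid_{X'})$.

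Next I would pass to the tree picture. Exactly as in Theorem~\ref{thm:function}, the tree algebra $\mathfrak{T}(\{\phi_x\}_{x\in X'}\cup\{0\})$ surjects under $ev$ onto $\mathcal{S}(X'_F)\cup\{\mathrm{id}\mid_{\mathfrak{X}}\}$, so I may fix a tree $T$ with $ev(T)=\Phi_{X'_F}$; by Lemma~\ref{lem:tau_homo} we then have $ev(\tau(T))=\Phi_{X'_F}$, so $ev(\tau(T))(\emptyset)=X'_F$, and every edge of $\tau(T)$ carries a label $\phi_r$ with $r$ catalyzed by an element of $X'$. I would prove, by induction from the leaves of $\tau(T)$ upward, that $\omega_V(v)(\emptyset)\subseteq X'_F$ for every vertex $v$: an outgoing edge of $v$ contributes either $\emptyset$ (if the labelling reaction does not fire on the child's output) or $\mathrm{ran}(r)\cap X_F$ for a reaction whose domain lies in the child's output together with $F$, hence in $X'_F\cup F=X'$, so that $\mathrm{ran}(r)\cap X_F\subseteq X'_F$ by the key observation. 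Finally, for $x\in X'_F=ev(\tau(T))(\emptyset)$ I choose, as in the proof of Theorem~\ref{thm:RAF}, a subtree $T^{min}(x)$ of $\tau(T)$ minimal for the property $x\in ev(T^{min}(x))(\emptyset)$; minimality forces every edge of $T^{min}(x)$ to fire (a non-firing edge together with its subtree could be deleted without changing any vertex output on $\emptyset$), and deleting subtrees only shrinks vertex outputs, so every vertex of $T^{min}(x)$ still outputs a subset of $X'_F$. Hence every edge label $\phi_r$ of $T^{min}(x)$ has $\mathrm{dom}(r)\subseteq X'$, so $r\in\pi_R(C\mid_{X'})$ by the key observation. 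Setting $R_i$ to be the reactions labelling the edges whose head has level $n+1-i$ (with $n=\mathrm{ht}(T^{min}(x))$), the argument of Theorem~\ref{thm:RAF} shows that $R_1,\dots,R_n$ satisfy \ref{cond:F1}--\ref{cond:F3}; since these reactions lie in $\pi_R(C\mid_{X'})$, condition~\ref{cond:F} is verified for $x$ inside the subCRS, and as $x$ was arbitrary the subCRS is RAF.

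The main obstacle is the inductive claim that $\tau(T)$ never leaves $X'_F$ at any vertex, not merely at the root. A priori the decorations $\phi_r$ of $\tau(T)$ need not preserve $\mathcal{P}(X'_F)$, because a reaction catalyzed by an $X'$-chemical may produce chemicals outside $X'$; it is precisely the hypothesis $\Phi_{X'_F}(\emptyset)=X'_F$, through the key observation, that forbids any such reaction from ever firing on a subset of $X'_F$ within $\tau(T)$. Carrying out this induction carefully, and checking that pruning down to the minimal subtree is compatible with the firing condition, is where the work lies; the rest is a direct transcription of the proof of Theorem~\ref{thm:RAF}.
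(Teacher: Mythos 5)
Your proposal is correct, and while its overall scaffolding is the same as the paper's (choose a tree representative $T$ of $\Phi_{X'_F}$ with edge labels in $\{\phi_x\}_{x\in X'}$, pass to $\tau(T)$, extract a minimal subtree $T^{min}(x)$ with $x\in ev(T^{min}(x))(\emptyset)$, read off the level sets $R_i$ and verify \ref{cond:F} as in Theorem \ref{thm:RAF}), you handle the one genuinely delicate step --- showing $\mathrm{supp}(R_i)\subset X'$, i.e.\ that the witnessing reactions lie in $R\mid_{X'}$ and are catalyzed within $X'$ --- by a different mechanism. The paper compares level-wise sums of vertex labels: it takes $T$ of minimal height, insists that $T^{min}(x)$ have the same height, forms $\phi_i^R=\sum_{v\in\mathrm{lv}_{n-i}(T^{min}(x))}\omega_V(v)$ and bounds it by the corresponding $\phi_i\in\mathcal{S}(X'_F)$ built from $T$, whence $\mathrm{ran}(R_i)\subset\phi_i(\emptyset)\cup F\subset\Phi_{X'_F}(\emptyset)\cup F=X'$. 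You instead extract from maximality of $\Phi_{X'_F}$ (closure of $\mathcal{S}(X'_F)$ under $\circ$) the identity $\Phi_{X'_F}(X'_F)=X'_F$, convert it into the reaction-level statement that any reaction catalyzed by $X'$ with domain in $X'$ has range in $X'$, and propagate this by a leaf-to-root induction over $\tau(T)$ showing every vertex output on $\emptyset$ stays in $X'_F$; minimality then only needs to force every edge of $T^{min}(x)$ to fire, which your pruning argument (non-firing edges contribute $\emptyset$, so deleting them with their subtrees leaves all outputs on $\emptyset$ unchanged) establishes correctly. What your route buys is that it avoids the paper's height-matching and level-sum comparison between $T^{min}(x)$ and $T$, and it delivers catalysis in $X'$ and membership in $\pi_R(C\mid_{X'})$ in one stroke via the key observation; it also isolates the reusable structural fact that $X'_F$ is invariant under all of $\mathcal{S}(X'_F)$, which makes the argument arguably more transparent. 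The minor points you gloss over (edges labelled $0$ contribute nothing under $\tau$ and under evaluation; a pruned vertex that becomes a leaf gets label $\mathrm{id}$, whose value on $\emptyset$ is still $\emptyset$) are harmless and are glossed in the paper as well.
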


\begin{proof}
The proof is analogous to the second part of the proof of Theorem \ref{thm:RAF}.
As in that proof, let $T \in \mathfrak{T}$ be a tree representative for the function $\Phi_{X'_F} \in \mathcal{S}$ of minimal height and let $T^{min}(x) \in \mathfrak{T}^R$ be a minimal subtree of $\tau(T)$ that satisfies $x \in ev(T^{min}(x))(\emptyset)$ for $x \in X'_F$ and has the same height as $T$.
Moreover, let $T$ be chosen such that all its edge labels are contained in the generating set $\{\phi_x\}_{x \in X'}$ of $\mathcal{S}(X'_F)$ (this is always possible since $T$ represents an element of $\mathcal{S}(X'_F)$).
This leads to the sets of reactions $R_1,\dots,R_n$ defined by (\ref{eq:RiFromTree}) and satisfying the condition \ref{cond:F} (the verification of this condition is analogous to the verification in the proof of Theorem \ref{thm:RAF}).
One only needs to ensure that all reactions $r$ contained in the $R_i$ satisfy $\textrm{supp}(r) \subset X'$, i.e. that they are elements of $R\mid_{X'}$, which is now shown:

The domain of each $R_i$ for $1 \leq i \leq n$ satisfies
\begin{equation*}
    \textrm{dom}(R_i) \subset \bigcup_{v \in \textrm{lv}_{n+1-i}(T^{min}(x))} \omega_V(v)(\emptyset) \cup F,
\end{equation*}
because the edges corresponding to the reactions with domains which are not contained in the set on the right hand side could be removed from $T^{min}(x)$, which would contradict its minimality (in the above formula, $\omega_V$ is the vertex decoration function of $T^{min}(x)$).
Therefore, it follows inductively that 
\begin{equation*}
    \textrm{ran}(R_i) \subset \bigcup_{v \in \textrm{lv}_{n-i}(T^{min}(x))} \omega_V(v)(\emptyset) \cup F.
\end{equation*}
Consider the functions
\begin{equation*}
    \phi_i^R:=\sum_{v \in \textrm{lv}_{n-i}(T^{min}(x))}\omega_V(v) \in \mathcal{S}^R,
\end{equation*}
which satisfy $\textrm{ran}(R_i) \subset \phi_i^R(\emptyset) \cup F$.
By construction of $T^{min}(x)$ as a subtree of $\tau(T)$ of the same height, the function $\phi_i^R$ is bounded from above by corresponding function $\phi_i$ constructed from $T$
\begin{equation*}
    \phi_i:=\sum_{v \in \textrm{lv}_{n-i}(T)}\omega_V(v) \in \mathcal{S}.
\end{equation*}
The $\phi_i$ are elements of $\mathcal{S}(X'_F)$ and are thus bounded from above by $\Phi_{X'_F}$.
This leads to the inclusion
\begin{align*}
    \textrm{ran}(R_i) \subset \phi_i^R(\emptyset) \cup F \subset \phi_i(\emptyset) \cup F \subset \Phi_{X'_F}(\emptyset) \cup F = X'.
\end{align*}
Together with the properites \ref{cond:F1} and \ref{cond:F2}, this yields $\textrm{supp}(R_i) \subset X'$ for all sets $R_i$.
\end{proof}

\noindent This proposition will be used to show that the fixed points of the dynamics, defined in the next section, are RAF sets of chemicals.

\subsection{Generative dynamics on a semigroup model and identification of the maximal RAF set of chemicals} \label{sec:dynamics}

The generative discrete dynamics of a CRS is introduced and used to determine the maximal RAF set of chemicals.
Starting with any set of chemicals $Y_0 \subset X_F$, there is a maximal function $\Phi_{Y_0}$ (Definition \ref{def:function}) that is supported on this set.
By acting on the empty set, $\Phi_{Y_0}(\emptyset)$ gives all non-food chemicals that can be generated from the food set by using functionality supported only on $Y_0$ and the food set.
The argument can be applied iteratively and gives rise to the following definition.

\begin{defn}
The {\it generative dynamics} of a CRS with the initial condition $Y_0 \subset X_F$ is generated by the propagator
\begin{align} \label{eq:propagatorg}
\mathcal{D}^g : \mathfrak{X} &\rightarrow \mathfrak{X}\\
\nonumber
Y &\mapsto \Phi_Y(\emptyset),
\end{align}
where $\Phi_Y$ is the function of $Y \subset X_F$.
The dynamics generated by $\mathcal{D}^g$ is parametrized by $\mathbb{Z}_{\geq 0}$ as
\begin{equation*} \label{eq:update}
Y_{n+1} = \mathcal{D}^g(Y_n) \text{ for all } n \in \mathbb{Z}_{\geq 0}.
\end{equation*}
\end{defn}

The generative dynamics has analogous properties to the sustaining dynamics and the reader is referred to Section 4.2. in \cite{Loutchko2019} for a more detailed discussion.
Here, only the properties needed for the proof of the main theorem are given.

\begin{rmk}
Due to the finiteness of the state space $\mathfrak{X}$, the dynamics either leads to a fixed point or to periodic behavior.
If the initial condition $Y_0$ leads to a fixed point, the dynamics is said to {\it stabilize} and the fixed point is denoted by $Y_0^{*g}$.
\end{rmk}

\begin{prop} \label{prop:fixedPoint}
Let the dynamics be given by $(Y_n)_{n \in \mathbb{Z}_{\geq 0}}$.
If $Y_1 \subset Y_0$, then
\begin{equation*}
Y_{n+1} \subset Y_n
\end{equation*}
holds for all $n \in \mathbb{Z}_{\geq 0}$ and the dynamics stabilizes.
The analogous statement holds for the case that $Y_1 \supset Y_0$.
\end{prop}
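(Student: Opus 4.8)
The plan is to reduce everything to the monotonicity of the propagator $\mathcal{D}^g$ and then run a routine induction together with a finiteness argument. First I would record that $\mathcal{D}^g$ is order-preserving: if $Z \subset Y \subset X_F$, then $\mathcal{S}(Z) < \mathcal{S}(Y)$, so by property \ref{rmk:POtranssitivity} of Lemma \ref{lem:elementaryProperties} one has $\Phi_Z \leq \Phi_Y$, and evaluating at $\emptyset$ gives $\mathcal{D}^g(Z) = \Phi_Z(\emptyset) \subset \Phi_Y(\emptyset) = \mathcal{D}^g(Y)$. This is the only structural input needed.

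Next I would prove $Y_{n+1}\subset Y_n$ for all $n\in\mathbb{Z}_{\geq 0}$ by induction on $n$. The base case $n=0$ is the hypothesis $Y_1 \subset Y_0$. For the inductive step, assume $Y_{n+1}\subset Y_n$; applying the monotonicity of $\mathcal{D}^g$ established above yields $Y_{n+2} = \mathcal{D}^g(Y_{n+1}) \subset \mathcal{D}^g(Y_n) = Y_{n+1}$, completing the induction. Thus $(Y_n)_{n\in\mathbb{Z}_{\geq 0}}$ is a descending chain of subsets of the finite set $X_F$. Any descending chain in a finite poset is eventually constant, so there is an $N$ with $Y_{n} = Y_N$ for all $n \geq N$; in particular $\mathcal{D}^g(Y_N) = Y_N$, so $Y_N$ is a fixed point and the dynamics stabilizes with $Y_0^{*g} = Y_N$.

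For the case $Y_1 \supset Y_0$ the argument is identical with all inclusions reversed: monotonicity of $\mathcal{D}^g$ gives $Y_{n+1}\supset Y_n$ for all $n$ by the same induction, so $(Y_n)$ is an ascending chain in $\mathfrak{X} = \mathcal{P}(X_F)$, which is finite, hence eventually constant, and its stable value is again a fixed point of $\mathcal{D}^g$. I do not expect any genuine obstacle here; the only point requiring care is to invoke the correct compatibility statement (property \ref{rmk:POtranssitivity}) to get monotonicity of $\mathcal{D}^g$, after which the proof is a two-line induction plus the pigeonhole-type observation that a monotone sequence in a finite lattice stabilizes.
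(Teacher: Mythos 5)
Your proposal is correct and follows essentially the same route as the paper: monotonicity of $Y \mapsto \Phi_Y(\emptyset)$ via property \ref{rmk:POtranssitivity} of Lemma \ref{lem:elementaryProperties}, an induction starting from the hypothesis $Y_1 \subset Y_0$, and stabilization of the resulting monotone chain by finiteness of $X_F$. The only cosmetic difference is that the paper also cites property \ref{eq:generators}, which your argument bypasses by evaluating the pointwise inequality $\Phi_{Y_n} \leq \Phi_{Y_{n-1}}$ directly at $\emptyset$; this is equally valid.
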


\begin{proof}
The proof proceeds by induction. 
By hypothesis $Y_1 \subset Y_0$ is satisfied.
Let $Y_n \subset Y_{n-1}$.
This implies the ordering of the respective functions $\Phi_{Y_n} \leq \Phi_{Y_{n-1}}$ by the property \ref{rmk:POtranssitivity}.
Together with the property \ref{eq:generators} this gives
\begin{equation*}
Y_{n+1} = \Phi_{Y_n}(\emptyset) \subset \Phi_{Y_{n-1}}(\emptyset) = Y_n.
\end{equation*}
The dynamics is thus given by the decreasing chain of subsets $Y_0 \supset Y_1 \supset ... \supset Y_n \supset Y_{n+1} \dots$ and, because $X_F$ is finite, the chain stabilizes.
The case $Y_1 \supset Y_0$ is treated analogously.
\end{proof}

\begin{lem} \label{lem:PRAFinclusion}
Let $X'_F \subset X_F$ be a RAF set of chemicals let $Y$ be a set that satisfies $X'_F \subset Y \subset X_F$.
Then the inclusion
 \begin{equation*}
     X'_F \subset \Phi_Y(\emptyset)
 \end{equation*}
holds.
\end{lem}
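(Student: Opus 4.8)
The statement follows by combining two facts already established: the inclusion for the function of a RAF set evaluated at the empty set, and the monotonicity of the function $\Phi_{\bullet}$ with respect to the underlying set of chemicals. The plan is therefore to first invoke Corollary \ref{corr:maxRAF}, then the order-monotonicity property \ref{rmk:POtranssitivity}, and finally chain the two resulting inclusions together by evaluating the relevant functions at $\emptyset$.

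\textbf{Step 1.} Since $X'_F$ is a RAF set of chemicals, Corollary \ref{corr:maxRAF} gives the inclusion
\begin{equation*}
    X'_F \subset \Phi_{X'_F}(\emptyset).
\end{equation*}

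\textbf{Step 2.} By hypothesis $X'_F \subset Y \subset X_F$, so property \ref{rmk:POtranssitivity} applied to the pair $X'_F \subset Y$ yields $\Phi_{X'_F} \leq \Phi_Y$ in the semigroup model $\mathcal{S}$. Evaluating both sides at $\emptyset$ and using the definition of the partial order on $\mathcal{S}$ (Remark \ref{rmk:PO}) gives $\Phi_{X'_F}(\emptyset) \subset \Phi_Y(\emptyset)$. Combining this with Step 1 produces the chain $X'_F \subset \Phi_{X'_F}(\emptyset) \subset \Phi_Y(\emptyset)$, which is the claim.

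\textbf{Main obstacle.} There is essentially no technical obstacle here: the lemma is a direct corollary of Corollary \ref{corr:maxRAF} and the monotonicity in \ref{rmk:POtranssitivity}. The only point requiring a moment's care is to invoke the monotonicity in the correct direction — a larger set of non-food chemicals yields a larger (in the sense of $\leq$) function, so passing from $X'_F$ to the intermediate set $Y$ can only enlarge the set generated from the food set, not shrink it.
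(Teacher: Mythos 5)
Your proof is correct and takes essentially the same route as the paper: the paper also obtains the chain $X'_F \subset \Phi_{X'_F}(\emptyset) \subset \Phi_Y(\emptyset)$ directly from Corollary \ref{corr:maxRAF} and the monotonicity property \ref{rmk:POtranssitivity}. Your additional remark about evaluating the order relation $\Phi_{X'_F} \leq \Phi_Y$ at $\emptyset$ just makes explicit what the paper leaves implicit.
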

\begin{proof}
The chain of inclusions
\begin{equation*}
    X'_F \subset \Phi_{X'_F}(\emptyset) \subset \Phi_Y(\emptyset)
\end{equation*}
follows from the Corollary \ref{corr:maxRAF} and the property \ref{rmk:POtranssitivity}.
\end{proof}

\noindent Now the main theorem is stated and proven:

\begin{thm}[on the maximal RAF set of chemicals] \label{thm:maxRAF}
The maximal RAF set of chemicals of a CRS is the fixed point of the generative dynamics $(Y_n)_{n \in \mathbb{Z}_{\geq 0}}$ with the initial condition $Y_0 = X_F$, i.e. it is the set $X_F^{*g}$.
\end{thm}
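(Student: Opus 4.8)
The plan is to establish two facts about the fixed point $X_F^{*g}$: that it is itself a RAF set of chemicals, and that it contains every RAF set of chemicals. Together these show that a maximal RAF set exists, is unique, and equals $X_F^{*g}$.

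First I would check that the dynamics with initial condition $Y_0 = X_F$ stabilizes. Since every element of $\mathcal{S}$ maps $\mathfrak{X} = \mathcal{P}(X_F)$ into itself, the first step satisfies $Y_1 = \Phi_{X_F}(\emptyset) \subset X_F = Y_0$. By Proposition \ref{prop:fixedPoint} the sequence $(Y_n)$ is then a decreasing chain $Y_0 \supset Y_1 \supset \dots$, which by finiteness of $X_F$ stabilizes at a fixed point $X_F^{*g}$ satisfying $\Phi_{X_F^{*g}}(\emptyset) = X_F^{*g}$. That $X_F^{*g}$ is a RAF set of chemicals is then immediate from Proposition \ref{prop:RAFsufficient}, since the fixed-point equation $\Phi_{X_F^{*g}}(\emptyset) = X_F^{*g}$ is precisely the sufficient condition proven there.

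For maximality, let $X'_F \subset X_F$ be an arbitrary RAF set of chemicals; I would show $X'_F \subset Y_n$ for all $n \in \mathbb{Z}_{\geq 0}$ by induction. The base case $X'_F \subset X_F = Y_0$ is trivial. For the inductive step, assume $X'_F \subset Y_n \subset X_F$; then Lemma \ref{lem:PRAFinclusion} applied with $Y = Y_n$ yields $X'_F \subset \Phi_{Y_n}(\emptyset) = Y_{n+1}$, and $Y_{n+1} \subset X_F$ holds automatically. Passing to the stabilized value gives $X'_F \subset X_F^{*g}$, so $X_F^{*g}$ contains every RAF set of chemicals and is therefore the maximal one.

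There is no genuine obstacle here: the statement is a bookkeeping consequence of Proposition \ref{prop:fixedPoint}, Proposition \ref{prop:RAFsufficient}, and Lemma \ref{lem:PRAFinclusion}, with the real work having been done in proving those. The only point that deserves a moment's care is confirming that the fixed point is reached through a monotone (decreasing) chain, so that the inductive passage to the limit in the maximality argument is legitimate — and this is exactly what Proposition \ref{prop:fixedPoint} provides once one notes $Y_1 \subset Y_0$.
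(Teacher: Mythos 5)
Your proposal is correct and follows essentially the same route as the paper's proof: existence of the fixed point via Proposition \ref{prop:fixedPoint} (noting $Y_1 \subset Y_0$), the RAF property of $X_F^{*g}$ via Proposition \ref{prop:RAFsufficient}, and maximality by iterating Lemma \ref{lem:PRAFinclusion} along the decreasing chain. Your version merely spells out the induction and the initial inclusion $\Phi_{X_F}(\emptyset) \subset X_F$ slightly more explicitly than the paper does.
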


\begin{proof}
It follows from Proposition \ref{prop:fixedPoint} that the dynamics has a fixed point $X_F^{*g}$.
By Proposition \ref{prop:RAFsufficient} this fixed point is a RAF set of chemicals.
It remains to show the maximality of $X_F^{*g}$:
For any RAF set of chemicals $X_F' \subset X_F$, the repeated application of Lemma \ref{lem:PRAFinclusion} implies that $X_F' \subset Y_n$ for all $n \in \mathbb{Z}_{\geq 0}$ and therefore $X_F' \subset X_F^{*g}$.
\end{proof}

\begin{corr} \label{corr:nilpotent}
A CRS with a nilpotent semigroup $(\mathcal{S},\circ)$ has no nontrivial RAF sets of chemicals.
\end{corr}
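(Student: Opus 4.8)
The plan is to obtain this as a short consequence of Corollary~\ref{corr:maxRAF} together with the monotonicity property~\ref{eq:generators}, with no use of the tree representation at all; indeed this is why the statement is phrased as a corollary. Recall first that a finite semigroup with a zero element is called \emph{nilpotent} if one of its powers equals $\{0\}$. Here the zero of $(\mathcal{S},\circ)$ is the zero function $0$ given by $0(Y)=\emptyset$, so nilpotency means there is an integer $N\ge 1$ with $\psi_{1}\circ\cdots\circ\psi_{N}=0$ for every choice of $\psi_{1},\dots,\psi_{N}\in\mathcal{S}$; in particular $\Phi^{N}=0$ for any single $\Phi\in\mathcal{S}$.

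I would argue by contradiction. Suppose $X'_{F}\subset X_{F}$ is a RAF set of chemicals with $X'_{F}\neq\emptyset$, and put $\Phi:=\Phi_{X'_{F}}$, which belongs to $\mathcal{S}$ by Definition~\ref{def:function} and the closure of $\mathcal{S}$ under $+$. The argument then proceeds in three steps. First, Corollary~\ref{corr:maxRAF} gives $X'_{F}\subset\Phi(\emptyset)$, so $\Phi(\emptyset)\neq\emptyset$. Second, applying the monotonicity property~\ref{eq:generators} repeatedly to the inclusion $\emptyset\subset\Phi(\emptyset)$ shows that the iterates of the empty set under $\Phi$ form an increasing chain
\[
\emptyset\subset\Phi(\emptyset)\subset\Phi^{2}(\emptyset)\subset\cdots,
\]
so that $\Phi^{k}(\emptyset)\supset\Phi(\emptyset)\supset X'_{F}$ for every $k\ge 1$. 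Third, nilpotency yields $\Phi^{N}=0$ and hence $\Phi^{N}(\emptyset)=0(\emptyset)=\emptyset$; together with the previous step this forces $X'_{F}=\emptyset$, a contradiction. Consequently every RAF set of chemicals is empty, i.e.\ the CRS has no nontrivial RAF set of chemicals.

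I expect the only point needing a little care to be the interface between the abstract hypothesis and the concrete model rather than any substantial argument: one should check that the zero witnessing nilpotency is genuinely the zero function --- equivalently, that $\phi(\emptyset)=\emptyset$ for all $\phi\in\mathcal{S}$, which is in fact automatic, since $\mathcal{S}\circ\mathcal{S}^{N}=\mathcal{S}\circ\{0\}=\{0\}$ forces $\phi\circ 0=0$ and hence $\phi(\emptyset)=\emptyset$ --- and that $\Phi_{X'_{F}}\in\mathcal{S}$, so that $\Phi_{X'_{F}}^{\,N}=0$ may legitimately be applied; both are immediate from the definitions. One can alternatively route the conclusion through the generative dynamics of Section~\ref{sec:dynamics}: by Theorem~\ref{thm:maxRAF} it suffices to show $X_{F}^{*g}=\emptyset$, and applying the three-step argument above to $\Phi_{X_{F}^{*g}}$ does exactly that.
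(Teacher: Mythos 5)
Your proof is correct and follows essentially the same route as the paper: Corollary~\ref{corr:maxRAF} gives $X_F'\subset\Phi_{X_F'}(\emptyset)$, and repeated use of the monotonicity property~\ref{eq:generators} shows $X_F'\subset\Phi_{X_F'}^{\,n}(\emptyset)$ for all $n$, so no power of $\Phi_{X_F'}$ can be the zero function, contradicting nilpotency. Your additional check that the zero witnessing nilpotency must be the zero function is a harmless (and valid) refinement that the paper leaves implicit.
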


\begin{proof}
Let $X_F' \subset X_F$ be a nontrivial RAF set of chemicals.
Then $X_F' \subset \Phi_{X_F'}(\emptyset)$ holds by Corollary \ref{corr:maxRAF} and thus the condition \ref{eq:generators} implies that
\begin{equation*}
    X_F' \subset \Phi_{X_F'}^n(\emptyset)
\end{equation*}
for any power of $\Phi_{X_F'}$, i.e. $\Phi_{X_F'}^n$ is nonzero for any $n \in \mathbb{N}$.
\end{proof}

Nilpotent semigroups comprise the largest class of semigroups as any magma\footnote{A magma is a semigroup without the associativity property.} with the product of any three elements equal to zero is automatically a semigroup, cf. \cite{Satoh1994,Almeida1995}.
The above corollary weeds out all nilpotent semigroups as candidates for semigroup models of self-generating CRS and states that such models are located in a more interesting class of semigroups.

\begin{rmk}[Connection to the RAF algorithm]
\cite{Hordijk2004} have presented an algorithm to find the maximal RAF set of reactions.
It consists of a dynamics on the power set of reactions $\mathcal{P}(R)$ generated by $R' \mapsto \delta(\gamma(R'))$ with the initial condition $R_0 = R$.
The following two operations are performed iteratively:
\begin{enumerate} [label=(R\arabic*),leftmargin=1cm]
    \item For a set $R' \subset R$, remove all reactions from $R'$ that have no catalyst in $\textrm{supp}(R')$ until no further reductions can be made.
    This yields the set $\gamma(R')$.
    \label{alg:R1}
    \item For a set $R' \subset R$, until no further reductions can be made, remove all reactions $r$ from $R'$ that satisfy ${\textrm{dom}(r) \not \subset \Phi_{R'}(\emptyset) \cup F}$, where $\Phi_{R'}$ is the maximal function of the semigroup model $\mathcal{S}^R(R') := \langle \phi_r \rangle_{r \in R'}$.
    This yields the set $\delta(R')$.
    \label{alg:R2}
\end{enumerate}
Note that \ref{alg:R2} has been rephrased here to suit the language of semigroup models.
This is similar in spirit to the algorithm given in Theorem \ref{thm:maxRAF} by the generative dynamics $Y \mapsto \Phi_Y(\emptyset)$, where the sets of chemicals $Y$ should be thought of as the support of $R'$ featured in the RAF algorithm.
By forming the function $\Phi_Y$, all reactions without a catalyst in $Y = \textrm{supp}(R')$ are excluded, which corresponds to \ref{alg:R1}.
The application of the function $\Phi_Y$ to the empty set corresponds to the exclusion of all reactions without support in $\Phi_Y(\emptyset)$, i.e. to the step \ref{alg:R2}.
\end{rmk}

\subsection{Functionally closed RAF sets of chemicals} \label{sec:funcClosed}

In addition to the knowledge of the maximal RAF set of chemicals, the hierarchy of RAF subsets of chemicals plays an important role in the understanding of a CRS.
Of particular importance are the RAF sets of chemicals which satisfy the fixed point equation for the dynamics and are termed {\it functionally closed} RAF sets of chemicals in this section.
This is closely related to the notion of functionally closed sets of self-sustaining chemicals, which is developed in \cite{Loutchko2019}, Section 4.4.\\

If, for a RAF set of chemicals $X_F' \subset X_F$, the inclusion $X_F' \subset \Phi_{X_F'}(\emptyset)$ is strict, then the set is not stable in the sense that it will produce additional chemicals over time.
First, the chemicals in $Y_1 = \Phi_{X_F'}(\emptyset)$ will be generated from the food set, followed by chemicals in $Y_2 = \Phi_{Y_1}(\emptyset)$, etc.
By Proposition \ref{prop:fixedPoint}, this dynamics stabilizes at the fixed point $X_F'^{*g}$, which contains the original RAF set of chemicals $X_F'$.
Moreover, being a fixed point of the dynamics, $X_F'^{*g}$ satisfies
\begin{equation*} \label{eq:fixedPoint}
    \Phi_{X_F'^{*g}}(\emptyset)=X_F'^{*g}
\end{equation*}
and is thus a RAF set of chemicals by Proposition \ref{prop:fixedPoint}.
The set $X_F'^{*g}$ is not able to further catalyze the generation of chemicals outside of $X_F'^{*g}$ from the food set and is thus functionally closed.
This motivates the following definition.
\begin{defn}
The {\it functional closure} of a RAF set of chemicals $X_F' \subset X_F$ is the fixed point $X_F'^{*g}$ of the generative dynamics.
If $X_F'$ satisfies the fixed point equation $\Phi_{X_F'}(\emptyset)=X_F'$, then it is said to be a {\it functionally closed} RAF set of chemicals.
\end{defn}
Alternatively, the closure of a RAF set of chemicals can be characterized as follows:
\begin{lem} \label{lem:closureAlt}
The functional closure $X_F'^{*g}$ of a RAF set of chemicals $X_F'$ is the unique minimal functionally closed RAF set of chemicals which contains $X_F'$.
\end{lem}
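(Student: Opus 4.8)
The plan is to establish two facts: first, that $X_F'^{*g}$ is itself a functionally closed RAF set of chemicals containing $X_F'$; and second, that $X_F'^{*g}$ is contained in every functionally closed RAF set of chemicals that contains $X_F'$. Together these show that $X_F'^{*g}$ is the unique minimal element among the functionally closed RAF sets containing $X_F'$.

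For the first fact I would proceed exactly as in the discussion preceding the definition of functional closure. Since $X_F'$ is a RAF set of chemicals, Corollary \ref{corr:maxRAF} gives $X_F' \subset \Phi_{X_F'}(\emptyset)$, i.e. $Y_0 \subset Y_1$ for the generative dynamics $(Y_n)_{n \in \mathbb{Z}_{\geq 0}}$ with initial condition $Y_0 = X_F'$. By Proposition \ref{prop:fixedPoint} (the increasing case) this dynamics stabilizes, so the fixed point $X_F'^{*g}$ exists and, being the limit of an increasing chain starting at $Y_0$, satisfies $X_F' \subset X_F'^{*g}$. As a fixed point it satisfies $\Phi_{X_F'^{*g}}(\emptyset) = X_F'^{*g}$, hence it is functionally closed by definition and, by Proposition \ref{prop:RAFsufficient}, a RAF set of chemicals.

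For the second fact, let $Z \subset X_F$ be any functionally closed RAF set of chemicals with $X_F' \subset Z$; by definition $\Phi_Z(\emptyset) = Z$. I would show by induction that $Y_n \subset Z$ for all $n$. The base case is $Y_0 = X_F' \subset Z$. For the inductive step, assume $Y_n \subset Z$; then property \ref{rmk:POtranssitivity} gives $\Phi_{Y_n} \leq \Phi_Z$, and applying both sides of this inequality of semigroup elements to the empty set (using the definition of the partial order on the semigroup model, Remark \ref{rmk:PO}) together with the fixed-point equation for $Z$ yields $Y_{n+1} = \Phi_{Y_n}(\emptyset) \subset \Phi_Z(\emptyset) = Z$. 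Passing to the fixed point, $X_F'^{*g} \subset Z$. Combining the two facts: $X_F'^{*g}$ is a functionally closed RAF set containing $X_F'$, and it is contained in every such set, so it is their unique minimal element.

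I do not expect a genuine obstacle here; the argument is short and rests entirely on results already proved. The only points requiring a little care are invoking the correct (increasing) direction of Proposition \ref{prop:fixedPoint} so that stabilization produces a set \emph{containing} $X_F'$, and recognizing that the inductive step closes precisely because functional closure of $Z$ means $\Phi_Z(\emptyset) = Z$ — so the monotonicity $\Phi_{Y_n} \leq \Phi_Z$ from \ref{rmk:POtranssitivity}, rather than any of the tree machinery, is all that is needed.
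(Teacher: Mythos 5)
Your proposal is correct and follows essentially the paper's own argument: the same induction showing that the generative dynamics started at $X_F'$ stays inside any functionally closed RAF set containing $X_F'$ (via property \ref{rmk:POtranssitivity} and the fixed-point equation), combined with the observation—which the paper establishes in the discussion preceding the definition of functional closure—that $X_F'^{*g}$ is itself such a set by Corollary \ref{corr:maxRAF}, Proposition \ref{prop:fixedPoint}, and Proposition \ref{prop:RAFsufficient}. The only cosmetic difference is that the paper runs the induction against a chosen \emph{minimal} closed RAF set $Y$ and concludes $X_F'^{*g}=Y$ by minimality, whereas you show containment in every closed RAF set containing $X_F'$, which yields the same conclusion.
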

\begin{proof}
Let $Y$ be a minimal functionally closed RAF set which contains $X_F'$ and let $(Y_n)_{n \in \mathbb{Z}_{\geq 0}}$ be the generative dynamics with the initial condition $Y_0 = X_F'$ and fixed point $X_F'^{*g}$.
Then $Y_n \subset Y$ holds for all $n \in \mathbb{Z}_{\geq 0}$, which can be verified by induction:
For $n=0$, the claim holds by assumption and the inductive step is verified by
\begin{equation*}
    Y_{n+1} = \Phi_{Y_n} (\emptyset) \subset \Phi_Y (\emptyset) = Y,
\end{equation*}
which follows from the property \ref{rmk:POtranssitivity}.
This implies that $X_F'^{*g} \subset Y$ and by the minimality of $Y$, the equality $X_F'^{*g} = Y$ must hold.
\end{proof}
\begin{rmk}
Note that the characterization of the closure of a RAF set of chemicals given by Lemma \ref{lem:closureAlt} does not extend to arbitrary sets, i.e. in general there does not exist a {\it unique} minimal functionally closed set of chemicals which contains $Y$ for a arbitrary set of chemicals $Y \subset X_F^{*g}$.
Fig. \ref{fig:notunique} provides an illustration.
The shown CRS is RAF and it has the functionally closed sets of chemicals given by $X_F = \{c,d,e\}$, $X_F' = \{c,d\}$ and $X_F'' = \{d,e\}$.
For the set $\{d\}$, there exists no unique minimal functionally closed set of chemicals which contains it.
\begin{figure}[ht]
  \centering
  \includegraphics[scale=0.25]{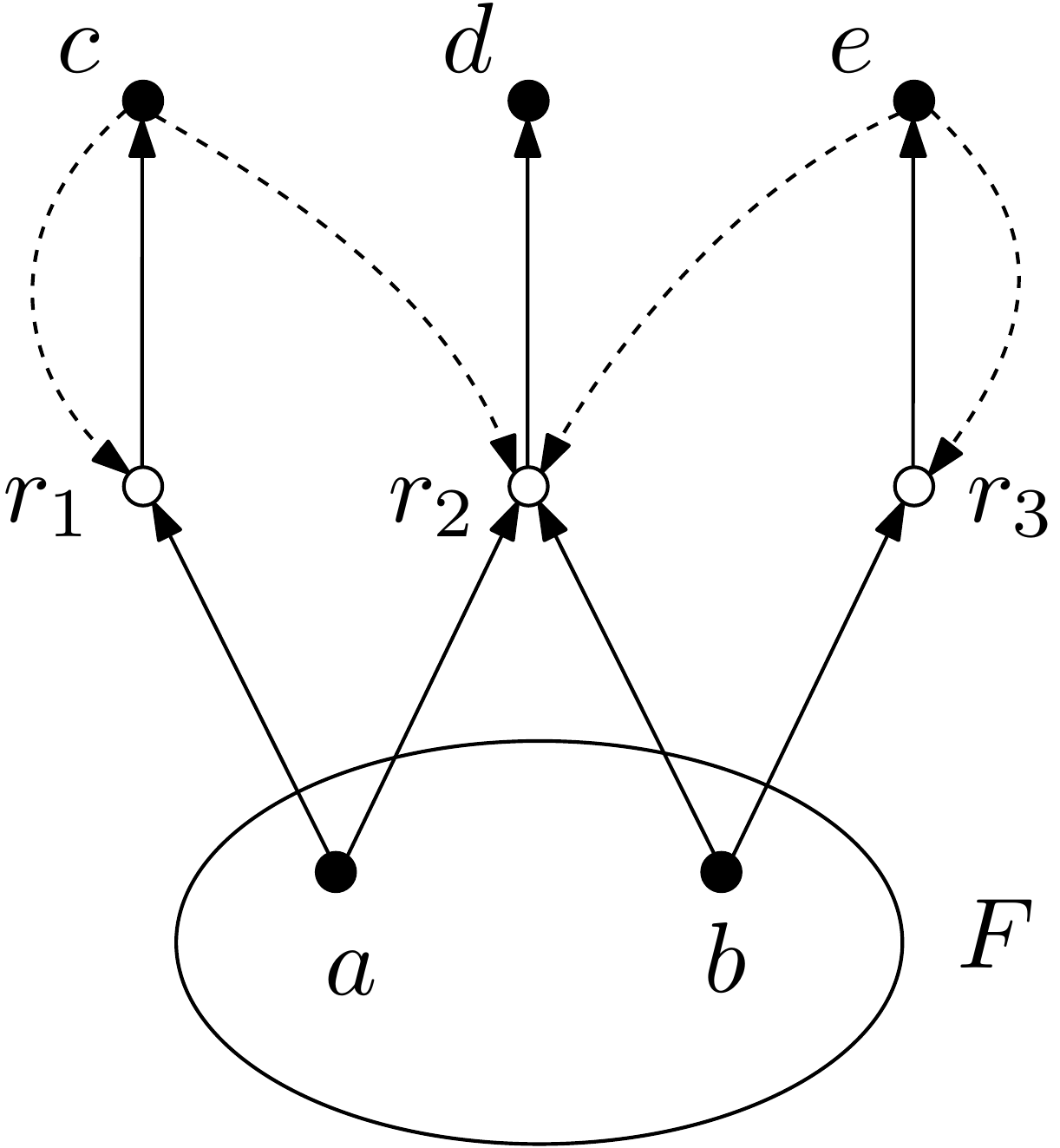}
  \caption{
  This CRS has the three functionally closed sets of chemicals given by $X_F = \{c,d,e\}$, $X_F' = \{c,d\}$ and $X_F'' = \{d,e,\}$.
  There is no unique minimal functionally closed set of chemicals which contains the set $\{d\}$.
  }
  \label{fig:notunique}
\end{figure}
\end{rmk}

The lattice of all functionally closed sets of chemicals can be obtained by the following construction.
Define the {\it reduced generative dynamics} by the propagator
\begin{align*} 
\mathcal{D}^{rg} : \mathfrak{X} &\rightarrow \mathfrak{X}\\
\nonumber
Y &\mapsto Y \cap \Phi_Y(\emptyset).
\end{align*}
This dynamics always stabilizes, and the fixed point for the initial condition $Y_0$ is denoted as $Y_0^{*rg}$.
The fixed point equation of this dynamics reads ${Y = Y \cap \Phi_Y(\emptyset)}$, which is equivalent to the fixed point equation $Y = \Phi_Y(\emptyset)$ of the generative dynamics.
For a set $Y \subset X_F$, define the set $\mathfrak{N}(Y) \subset \mathfrak{X}$ as
\begin{align*}
    \mathfrak{N}(Y) := \left\{ (Y \setminus \{y\})^{*rg} \textrm{ for }y \in Y \right\}.
\end{align*}
All of the sets contained in $\mathfrak{N}(Y)$ are functionally closed RAF sets of chemicals by Proposition \ref{prop:RAFsufficient}.
Moreover, let $X'_F \subset Y$ be a functionally closed RAF set of chemicals which is strictly contained in $Y$ and is maximal with this property.
Then there is a chemical $y \in Y \setminus X'_F$ and one verifies that $X'_F = (Y \setminus \{y\})^{*rg}$.

\noindent Now inductively define the following sets
\begin{align*}
    \mathfrak{N}^0 &:= \{ X_F^{*g} \} \\ 
    \mathfrak{N}^{i+1} &:= \bigcup_{Y \in  \mathfrak{N}^i} \mathfrak{N}(Y) \textrm{ for all }i \in \mathbb{Z}_{\geq 0}.
\end{align*}
Due to the finiteness of $\mathfrak{X}$, there is an $N \in \mathbb{N}$ such that $\mathfrak{N}^{i+1} = \{ \emptyset \}$ for all $i > N$.
The following theorem gives a description of the lattice of functionally closed RAF sets of chemicals of a CRS, which extends the characterization of the maximal RAF set of chemicals provided in Theorem \ref{thm:maxRAF}.

\begin{thm} \label{thm:funcClosed}
The set
\begin{equation*}
    \mathfrak{N} := \bigcup_{i=0}^N \mathfrak{N}^i
\end{equation*}
is the set of all functionally closed RAF sets of chemicals of the CRS.
\end{thm}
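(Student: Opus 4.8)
The plan is to prove two inclusions: first that every element of $\mathfrak{N}$ is a functionally closed RAF set of chemicals, and second that every functionally closed RAF set of chemicals appears in $\mathfrak{N}$. The first inclusion is essentially bookkeeping: by definition $\mathfrak{N}^0 = \{X_F^{*g}\}$, which is functionally closed by Theorem~\ref{thm:maxRAF} and the fixed point equation, and every set in $\mathfrak{N}(Y)$ has the form $(Y\setminus\{y\})^{*rg}$, hence satisfies the fixed point equation $Z = \Phi_Z(\emptyset)$ and is therefore a RAF set of chemicals by Proposition~\ref{prop:RAFsufficient}. By induction on $i$, every set in $\mathfrak{N}^i$ is functionally closed RAF, so $\mathfrak{N} \subset \{\textrm{functionally closed RAF sets}\}$.

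The substantive direction is the reverse inclusion. Let $Z \subset X_F$ be an arbitrary functionally closed RAF set of chemicals; I want to produce an index $i$ and a chain witnessing $Z \in \mathfrak{N}^i$. The idea is to run a descending search from the top: $Z \subset X_F^{*g}$ by maximality (Theorem~\ref{thm:maxRAF}), so $Z$ sits below the unique element of $\mathfrak{N}^0$. Now I would argue that if $Z$ is a functionally closed RAF set strictly contained in some $Y \in \mathfrak{N}^i$, then $Z$ is contained in some functionally closed RAF set $X'_F$ which is maximal among functionally closed RAF sets strictly contained in $Y$; pick a witness $y \in Y \setminus X'_F$ and observe (as already noted in the text preceding the theorem) that $X'_F = (Y\setminus\{y\})^{*rg}$, hence $X'_F \in \mathfrak{N}(Y) \subset \mathfrak{N}^{i+1}$. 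Iterating, $Z$ is contained in a strictly decreasing chain $X_F^{*g} = Y^{(0)} \supsetneq Y^{(1)} \supsetneq \cdots$ with $Y^{(k)} \in \mathfrak{N}^k$ and $Z \subset Y^{(k)}$ at each stage. Since $|X_F|$ is finite, the chain must terminate; it terminates precisely when $Z = Y^{(k)}$, giving $Z \in \mathfrak{N}^k \subset \mathfrak{N}$.

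The step I expect to be the main obstacle is the claim that $X'_F = (Y\setminus\{y\})^{*rg}$ for the chosen witness $y$, together with the verification that $Z$ is still contained in $X'_F$ so that the recursion can continue. For the first part I would argue both inclusions: since $X'_F \subset Y\setminus\{y\}$ and $X'_F$ is functionally closed, applying the reduced dynamics to $Y\setminus\{y\}$ keeps $X'_F$ inside every iterate (using property~\ref{rmk:POtranssitivity} exactly as in the proof of Lemma~\ref{lem:closureAlt}), so $X'_F \subset (Y\setminus\{y\})^{*rg}$; conversely $(Y\setminus\{y\})^{*rg}$ is itself a functionally closed RAF set strictly contained in $Y$ (it omits $y$) and contains $X'_F$, so by the maximality in the choice of $X'_F$ the two must coincide. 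For the second part, any functionally closed RAF set $Z \subsetneq Y$ is contained in \emph{some} maximal such set, and among those maximal sets I may simply choose one containing $Z$; this is where one must be a little careful, since a priori there could be several maximal functionally closed RAF sets strictly inside $Y$ and $Z$ need only lie in one of them. Finally I would note that the bound $N$ (equivalently the upper index $|X_F|$ as written in the introduction) suffices because each step of the chain strictly decreases cardinality, so no chain of functionally closed RAF sets starting at $X_F^{*g}$ has length exceeding $|X_F|$.
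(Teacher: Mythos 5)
Your proposal is correct and follows essentially the same route as the paper: the forward inclusion by construction (via Proposition \ref{prop:RAFsufficient}), and the reverse inclusion by descending from $X_F^{*g}$ through maximal functionally closed RAF sets strictly contained in the current set, using the fact that any such maximal set equals $(Y\setminus\{y\})^{*rg}$ and hence lies in $\mathfrak{N}(Y)$ — the paper packages this descent as a chain of maximal length, while you build it iteratively keeping $Z$ inside at each step. Your explicit verification of $X'_F=(Y\setminus\{y\})^{*rg}$ and of the existence of a maximal functionally closed RAF set containing $Z$ simply spells out what the paper leaves as ``one verifies.''
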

\begin{proof}
By construction, all elements of $\mathfrak{N}$ are functionally closed RAF sets of chemicals.
It remains to show that all functionally closed RAF sets of chemicals are indeed contained in $\mathfrak{N}$.
In this regard, recall that $\mathfrak{N}(Y)$ contains all maximal closed RAF sets of chemicals which are strictly contained in $Y$.
For a functionally closed RAF set of chemicals $X_F'$, there exists a chain of maximal length of functionally closed RAF sets of chemicals
\begin{equation*}
    X_F' = X_n \subsetneq X_{n-1} \subsetneq \dots \subsetneq X_0 = X_F^{*g}.
\end{equation*}
Then, $X_F'$ must be contained in $\mathfrak{N}^n$.
\end{proof}

This finishes the application of the semigroup models and their representations by decorated rooted tree to self-generating CRS.
The possible implications of the results of this section are now discussed.

\section{Discussion} \label{sec:discussion}

A general discussion of the semigroup models of CRS is given by \cite{Loutchko2019}, where, for example, algebraic properties and the possibility to analyze the computational properties of CRS with their semigroup models are expounded upon.\\

In this article, it was demonstrated how the language of semigroup models provides a natural framework to treat CRS with the RAF property, to determine the maximal RAF set of chemicals and to determine the lattice of functionally closed RAF sets of chemicals.
The technical basis is provided by the representation of the elements of the semigroup models as decorated rooted trees, because this representation is particularly useful in making the relation of semigroup elements with the F property precise.
It will be interesting to investigate whether such representations can be used more generally in the theoretical study of (not necessarily finite) semigroups and semirings.
Similar representations have turned out to be useful in the theory of self-similar groups introduced by \cite{nekrashevych2005}.\\

With regard to CRS theory, the notion of functionally closed sets of RAF chemicals is a very natural concept within the theory of semigroup models.
One is naturally led to consider the fixed points of the dynamics, which are RAF sets of chemicals by Proposition \ref{prop:fixedPoint}.
Moreover, Lemma \ref{lem:closureAlt} ensures that each RAF set of chemicals has a uniquely determined functional closure with nice properties.
The analysis of the lattice of functionally closed RAF sets of chemicals of a CRS within a living organism can potentially provide insights into the modular organization of its metabolism and the respective control mechanisms.
The fact that arbitrary subsets of $X_F$ - in contrast to RAF sets of chemicals - do not have a unique minimal functionally closed RAF set of chemicals which contains them, inspires further investigation of CRS of real biological systems.
If a chemical (or a set of chemicals) has a unique minimal functionally closed RAF set of chemicals to which it belongs, then one can conjecture that this chemical is specific for the respective functional module.
And it is likely that this chemical was acquired together with the respective module in the course of evolution.
If, however, this is not the case - such as for the chemical $d$ the example shown in Fig. \ref{fig:notunique}, then the respective chemical serves as a kind of mediator between the functional modules in which it is contained.
It will be interesting to test such hypotheses on CRS of biological systems and to develop new ones by applying the techniques provided by the semigroup formalism.\\

Another possibility suggested by the algebraic models of CRS is the coarse-graining obtained by taking quotients of the semigroups which are well-behaved with respect to the algebraic operations.
The technical difficulty is thereby to relate the quotients of functions, which live in $\mathcal{T}(\mathfrak{X})$ to quotients of the state space $\mathfrak{X}$ in a natural manner.
This work is currently being finalized.
This more algebraic approach provides an alternative way to reveal and analyze the modularity of a given CRS.
Whereas the lattice of functionally closed RAF sets of chemicals rely on the self-generating property, the quotient structures do not.
Therefore, in future, it will be interesting to compare the approach presented in Section \ref{sec:funcClosed} of this article to the algebraic coarse-graining procedures.\\

\backmatter

\bmhead{Acknowledgments}

I thank Tetsuya J. Kobayashi and all members of the Kobayashi lab for stimulating discussions.
This research is supported by JSPS KAKENHI Grant Numbers 19H05799 and 21K21308, and by JST CREST JPMJCR2011 and JPMJCR1927.


\bibliography{literatur}


\end{document}